\newtheorem{theorem}{\textbf{Theorem}}
\newtheorem{definition}{\textbf{Definition}}
\newcommand{\abs}[1]{\mathopen{}\mathclose\bgroup\left\lvert#1\aftergroup\egroup\right\rvert}
\newcommand{\norm}[1]{\mathopen{}\mathclose\bgroup\left\|#1\aftergroup\egroup\right\|}
\newcommand{\kl}[1]{\mathopen{}\mathclose\bgroup\left(#1\aftergroup\egroup\right)}
\newcommand{\klg}[1]{\mathopen{}\mathclose\bgroup\left\{#1\aftergroup\egroup\right\}}
\newcommand{\kle}[1]{\mathopen{}\mathclose\bgroup\left[#1\aftergroup\egroup\right]}
\newcommand{\kls}[1]{\mathopen{}\mathclose\bgroup\left\langle#1\aftergroup\egroup\right\rangle}
\newcommand{\floor}[1]{\mathopen{}\mathclose\bgroup\left\lfloor#1\aftergroup\egroup\right\rfloor}
\newcommand{\ceil}[1]{\mathopen{}\mathclose\bgroup\left\lceil#1\aftergroup\egroup\right\rceil}
\newcommand{\leftopen}[1]{\mathopen{}\mathclose\bgroup\left(#1\aftergroup\egroup\right]}
\newcommand{\defi}{\mathrel{\mathop:}=}
\newcommand{\ifed}{=\mathrel{\mathop:}}
\DeclareMathOperator{\sgn}{sgn}
\newcommand{\refrel}[2]{\stackrel{\text{\ref{#1}}}{#2}}
\newcommand{\dblrefrel}[3]{\stackrel{\text{\ref{#1},\ref{#2}}}{#3}}
\newcommand{\Epi}{\affiliation{Department of Epileptology, University of Bonn, Sigmund-Freud-Stra{\ss}e~25, 53105~Bonn, Germany}}
\newcommand{\HISKP}{\affiliation{Helmholtz Institute for Radiation and Nuclear Physics, University of Bonn, Nussallee~14--16, 53115~Bonn, Germany}}
\newcommand{\IZKS}{\affiliation {Interdisciplinary Center for Complex Systems, University of Bonn, Br\"uhler Stra\ss{}e~7, 53175~Bonn, Germany}}
\begin{document}

\title{A highly specific test for periodicity}

\author{Gerrit Ansmann}
\Epi \HISKP \IZKS

\begin{abstract}
We present a method that allows to distinguish between nearly periodic and strictly periodic time series.
To this purpose, we employ a conservative criterion for periodicity, namely that the time series can be interpolated by a periodic function whose local extrema are also present in the time series.
Our method is intended for the analysis of time series generated by deterministic time-continuous dynamical systems, where it can help telling periodic dynamics from chaotic or transient ones.
We empirically investigate our method's performance and compare it to an approach based on marker events (or Poincar\'e sections).
We demonstrate that our method is capable of detecting small deviations from periodicity and outperforms the marker-event-based approach in typical situations.
Our method requires no adjustment of parameters to the individual time series, yields the period length with a precision that exceeds the sampling rate, and its runtime grows asymptotically linear with the length of the time series.
\end{abstract}

\maketitle

\begin{quotation}
	Classifying the dynamics of a system plays an important role in its understanding.
	One of the main classes of dynamics are periodic ones and as a consequence deciding about periodicity is important in various scientific fields.
	While several methods for this exist, there is a shortage of such that can discriminate between nearly periodic and strictly periodic dynamics, as such details are usually lost under experimental conditions (at which these methods are aimed).
	However, the latter does not hold for simulated deterministic systems, which are often used as models to improve our understanding of real systems.
	With such systems in mind, we propose an efficient periodicity test for time series that is capable of detecting small deviations from periodicity.
	We demonstrate our method's capabilities and show that it outperforms existing methods in typical situations.
	While our approach is not aimed at analyzing experimental time series, aspects of it may enhance existing or inspire new techniques for this purpose.
\end{quotation}

\section{Introduction}
Two central distinctions in the classification of dynamical systems are that between a chaotic dynamics and a regular one as well as that between a transient dynamics and a stable one~\cite{Guckenheimer1983, Haykin1983, Hale1991, Strogatz1994, Ott2002, Kantz2003, Gottwald2014}.
As all periodic dynamics are neither chaotic nor transient, being able to decide that a dynamics is periodic is hence a valuable asset when analyzing dynamical systems.
We here consider this issue for the case of simulated time-continuous deterministic systems.

A plethora of methods to identify periodicities in time series have been proposed in the past, which can roughly be divided into four approaches:
\emph{Periodogram\nobreakdash-} or \emph{autocorrelation-based} methods \cite{Burki1978, Siegel1980, Scargle1982, Vlachos2005} search and evaluate local maxima in the frequency domain or in the autocorrelation function.
\emph{Epoch-folding} techniques \cite{Heck1985, Davies1990, Cincotta1999, Larsson1996} are based on sorting the time series' values into bins according to their phase for a presumed period length and finding the optimal period length according to some test statistics.
Recently, approaches have been proposed that search for repeating \emph{patterns}, i.e., subsequences of symbols \cite{Han1999, Ergun2004, Elfeky2005}.
Finally, Ref.~\onlinecite{Rosenblum1995} proposed to search for local minima of the \emph{mean fluctuation function}.

With exception for the latter, neither of these methods aims at detecting small deviations from periodicity, because under experimental conditions, for which these methods were designed, such details can rarely be captured due to noise and other confounding factors.
For example, there may be no indicative difference between the Fourier spectra of chaotic and similar periodic signals~\cite{Rosenblum1995}.
In contrast, time series from simulated systems may reflect small deviations from periodicity, which can be of considerable interest in their analysis.
Moreover, unless the period length is a multiple of the sampling interval, all of the above methods either do not provide a straightforward way to decide whether the time series is periodic \cite{Rosenblum1995, Vlachos2005, Han1999, Ergun2004, Elfeky2005}, e.g., via a critical value of some test statistics,  or test against the null hypothesis that the time series is random \cite{Burki1978, Cincotta1999, Siegel1980, Scargle1982, Heck1985, Davies1990, Larsson1996}, and thus do not distinguish between regular and chaotic or stable and transient time series.

Using embedding techniques~\cite{Takens1981, Marwan2007} or time series of an observable and its temporal derivative~\cite{Brazier1994, Freire2001}, if available, one can reconstruct the dynamics's phase space.
Its structure and other properties may then form the basis of several kinds of analyses \cite{Brazier1994, Freire2001, Kantz2003, Zou2010} --~some of which require a careful selection of parameters and pose a lot of pitfalls~--, but only in case of sinusoidal time series is there a criterion for periodicity~\cite{Brazier1994}.

If the dynamics of a system is known, methods that go beyond the analysis of observables become available:
The theory of dynamical systems allows for an analytical classification, e.g., by analyzing the stable and unstable manifolds of fix points and separatrices between basins of attraction \cite{Hale1991, Ott2002}.
Another approach (often used for continuation problems) is finding a piecewise polynomial approximation of a periodic orbit by numerically solving the respective nonlinear equation system~\cite{Krauskopf2007}.
These approaches may, however, become unfeasible for complicated and high-dimensional dynamics.
A method that does not suffer from these problems is computing the sign of the largest Lyapunov exponent~\cite{Benettin1980}.
Here the difficulty arises that a zero sign (signifying a regular dynamics) can only form the null hypothesis and can thus only be rejected and not accepted.

We here propose a method that allows to decide whether a time series complies with a comparably conservative periodicity criterion and that is thus capable of detecting small deviations from periodicity, such as a slowly rising amplitude or period length.
We first motivate and define this criterion (Sec.~\ref{criterion}) and then describe our approach to tell whether it is fulfilled for a given period length (Sec.~\ref{singleperiod}), for any period length from a small interval (Sec.~\ref{interval}) and finally for any reasonable period length (Sec.~\ref{anylength}).
We then extend our method to account for small, bounded errors, such as numerical ones (Sec.~\ref{errors}).
Using archetypal noise-free (Sec.~\ref{archetypes}) and noise-contaminated (Sec.~\ref{noisy}) test cases, we investigate our method's performance and compare it to an alternative approach based on marker events.
Finally, we demonstrate that our method can help classifying the dynamics of simulated dynamical systems (Sec.~\ref{ode}) and draw our conclusions (Sec.~\ref{conclusions}).

\section{Method}\label{method}

Some general remarks on notation:
Lowercase italic letters denote integers; fraktur denotes fractions; Greek letters denote real numbers.
Uppercase letters denote respectively valued functions.
Blackboard-bold letters denote sets.
The binary modulo operator ranges between addition and multiplication in the order of operations.

The source code of an implementation of our method in~C as well as of a standalone program and a Python module building upon it is freely available~\cite{Source}.

\subsection{General approach and criterion}\label{criterion}

\newcommand{\urbild}{\ensuremath{\klg{0,\ldots,n-1}}\xspace}
\newcommand{\Fn}{\ensuremath{\tilde{\mathbb{F}}_n}\xspace}
\newcommand{\taufoldation}[1][]{\(\tau\)\nobreakdash-foldation#1\xspace}
\newcommand{\tauperiodic}[1][]{\(\tau\)\nobreakdash-periodic#1\xspace}

\begin{figure*}
	\includegraphics[width=\linewidth]{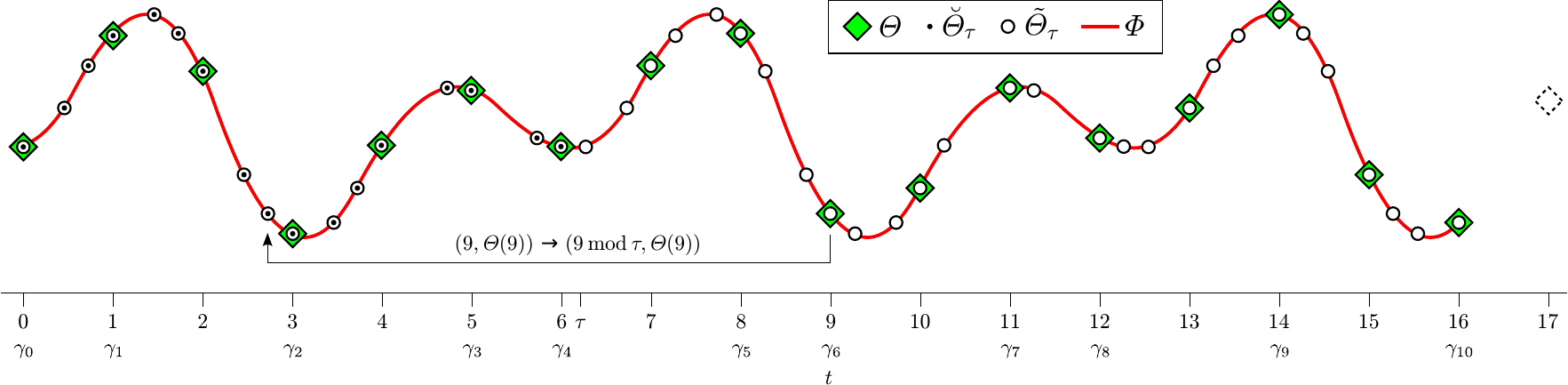}
	\caption{Illustration of several concepts on the example of a time series~\(\Theta\) (green squares) defined on \(\klg{0, \ldots, 16}\) using a period length of \(\tau \approx 6.2\) for folding.
	\(\Phi\)~(red line) is a \tauperiodic extension of \(\Theta\).
	Except for an undetectable final extremum (see below), all of \(\Phi\)'s local extrema are captured by~\(\Theta\) and thus \(\Theta\)~complies with a period length of~\(\tau\) and meets the criteria of Def.~\ref{D:periodic}, respectively.
	The arrow illustrates the process of folding (\(\kl{t, \Theta\kl{t}} \rightarrow \kl{t \bmod \tau, \Theta\kl{t}}\)) for \(t=9\).
	The \taufoldation~\(\breve{\Theta}_\tau\) is marked by small dots; \(\tilde{\Theta}\) is marked by circles.
	\(\gamma_0, \ldots, \gamma_{10}\) indicate the values of one longest sequence \(\gamma\) for counting the extrema of~\(\Theta\) via Def.~\ref{D:extremum}.
	\(\Phi\)~and~\(\tilde{\Theta}_\tau\) have an undetectable final extremum at \(t \approx 15.7\), which could be captured by \(\Theta\) if it lasted until \(t=17\) and contained, e.g., the point marked by the dashed square.
	\(\Phi\)~and~\(\tilde{\Theta}_\tau\) have no undetectable initial extrema.
	}
	\label{fig:concepts}
\end{figure*}

\begin{figure}
	\includegraphics{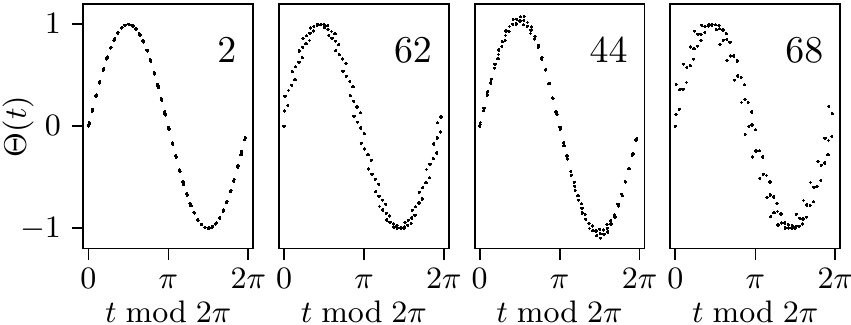}
	\caption{Approximations of one presumed period of length~\(\tau=2\pi\) (\taufoldation) from exemplary time series \(\Theta\) of length \(n=100\).
	From left to right:
	\(\Theta\kl{t} = \sin\kl{t}\);
	\(\Theta\kl{t} = \sin\kl{1.003\,t}\);
	\(\Theta\kl{t} = \kl{1+0.001\,t} \sin\kl{t}\);
	\(\Theta\kl{t} = \sin\kl{\kl{1+0.00005\,t}\,t}\).
	The numbers in the top right indicate the approximation's number of local extrema.
	}
	\label{fig:Beispiele}
\end{figure}

Before delving into the mathematical details, we give a brief overview and illustration over our approach:
We consider a non-constant time series \(\Theta: \urbild \rightarrow \mathbb{R}\) of length~\(n \geq 3\), assuming without loss of generality a sampling rate of~\(1\).
We want to decide about the existence of an extension~\(\Phi\) of~\(\Theta\) to the interval \(\kle{0,n-1}\) that is periodic with a period length~\(\tau\) (which we will initially consider to be fixed) and all of whose local extrema are captured by~\(\Theta\).
The latter criterion serves to avoid strongly oscillating solutions for~\(\Phi\), which could almost always be found.
The function~\(\Phi\) can be understood as the signal underlying~\(\Theta\) (see Fig.~\ref{fig:concepts}).

Following the ansatz of epoch-folding techniques \cite{Heck1985}, we fold all periods into one by mapping \(\kl{t, \Theta\kl{t}}\) to \(\kl{t \bmod \tau, \Theta\kl{t}}\), thus obtaining an approximation of a single presumed period of~\(\Phi\), which we refer to as \emph{foldation, \taufoldation,} or~\(\breve{\Theta}_\tau\) in the following (see Fig.~\ref{fig:concepts}).
We then check whether the foldation's number of local extrema complies with that of~\(\Theta\).
As an example, we show in Fig.~\ref{fig:Beispiele} \taufoldation[s] from four time series, all of which have ca. two local extrema per~\(\tau\) time units.
We observe that small deviations of the period length or from periodicity suffice to drastically increase the foldation's number of local extrema.

As the exact definition of an individual local extremum is not relevant for our method, we only define \textbf{the number} of local extrema of some function~\(\Xi\).
To this purpose we employ the longest zigzagging sequence with points from \(\Xi\)'s graph (\(\kl{\gamma_0, \Xi\kl{\gamma_0}}, \ldots, \kl{\gamma_{k+1}, \Xi\kl{\gamma_{k+1}}}\), see Fig.~\ref{fig:concepts}):

\begin{definition} \label{D:extremum}
	We define \emph{the number of local extrema}~\(E\kl{\Xi}\) of a function \(\Xi : \mathbb{D} \rightarrow \mathbb{R}\) with \(\mathbb{D} \subset \mathbb{R}\) as the largest~\(k\) such that there is  an increasing sequence \(\gamma_0, \ldots, \gamma_{k+1} \in \mathbb{D}\) for which \(0 \neq \sgn\kl{\Xi\kl{\gamma_i}-\Xi\kl{\gamma_{i-1}}} \neq \sgn\kl{\Xi\kl{\gamma_{i+1}}-\Xi\kl{\gamma_{i}}} \neq 0\) for all \(i \in \klg{1, \ldots, k}\).
\end{definition}

Next, we define a correction term for extrema of some function~\(\Xi\) that \(\Theta\)~does not capture, but could capture if it began earlier or lasted longer, i.e., extrema that \(\Theta\)~fails to capture due to its finiteness and not due to \(\Xi\)~strongly oscillating (also see Fig.~\ref{fig:concepts}).

\begin{definition} \label{D:correction}
	We say that \(\Xi : \mathbb{D} \rightarrow \mathbb{R}\) with \(\min\kl{\mathbb{D}} = 0\), \(\max\kl{\mathbb{D}} = n-1\), and \(\urbild \subset \mathbb{D}\) has \emph{undetectable initial extrema} if there is a \(\lambda \in \mathbb{D} \cap \kle{0, 1}\) such that \(\sgn\kl{\Xi\kl{1}-\Xi\kl{0}} \neq \sgn \kl{\Xi\kl{\lambda} - \Xi\kl{0}} \neq 0\).
	Analogously, we define \emph{undetectable final extrema.}
	Finally, to correct for undetectable extrema, we define \(C\kl{\Xi}\) to be~\(2\) if \(\Xi\)~has both undetectable initial and final extrema, \(0\)~if it has neither, and \(1\)~otherwise.
\end{definition}

Finally, we define the property, we are testing for:
\begin{definition} \label{D:periodic}
	We say that a time series \(\Theta: \urbild \rightarrow \mathbb{R}\) \emph{complies with a period length~\(\tau\)} iff an extension \(\Phi : \kle{0,n-1} \rightarrow \mathbb{R}\) of \(\Theta\) exists such that \(\Phi\kl{t} = \Phi\kl{t \bmod \tau} ~\forall t \in \kle{0,n-1}\) and \(E\kl{\Phi} - C\kl{\Phi} = E\kl{\Theta}\).
\end{definition}
\noindent From this definition, it follows that \(\Theta\)~complies with every period length larger than \(n-1\).
Also, \(\Theta\)~complies with no period length shorter than~\(2\), if \(n\)~is sufficiently large.

\subsection{Deciding about periodicity with a given period length}\label{singleperiod}

The following theorem (see Appendix~\ref{proofT1} for a proof) allows us to decide as to whether \(\Theta\)~complies with a period length~\(\tau\) by counting the local extrema of an approximation \(\tilde{\Theta}_\tau\) of all of~\(\Phi\) from a foldation (see Fig.~\ref{fig:concepts}).

\begin{theorem} \label{T:main}
	For a given~\(\tau\), let \(\tilde{\mathbb{D}}_\tau \defi \klg{\alpha \in \kle{0,n-1} \mid \exists i \in \urbild: i \bmod \tau = \alpha \bmod \tau}\).
	For \(\alpha \in \tilde{\mathbb{D}}_\tau\), let \(l_\alpha \in \urbild\) be defined such that \(l_\alpha \bmod \tau = \alpha \bmod \tau\).
	Finally define \(\tilde{\Theta}_\tau: \tilde{\mathbb{D}}_\tau \rightarrow \mathbb{R}\) via \(\tilde{\Theta}_\tau \kl{\alpha} \defi \Theta\kl{l_\alpha}\).
	Then \(\Theta\) complies with a period length~\(\tau\), if and only if \(\tilde{\Theta}_\tau\) is well-defined and \(E(\tilde{\Theta}_\tau) - C(\tilde{\Theta}_\tau)= E\kl{\Theta}\).
\end{theorem}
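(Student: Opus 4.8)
The plan is to reduce the claimed equivalence to an equality of extrema counts and to prove it through a chain of inequalities relating \(E\kl{\Theta}\), \(E(\tilde{\Theta}_\tau)\), and \(E\kl{\Phi}\) for a \tauperiodic extension \(\Phi\) as in Def.~\ref{D:periodic}. First I would dispose of the degenerate case. The function \(\tilde{\Theta}_\tau\) is well-defined precisely when any two integers \(i,j \in \urbild\) with \(i \bmod \tau = j \bmod \tau\) satisfy \(\Theta\kl{i} = \Theta\kl{j}\), which is exactly the condition for \emph{some} \tauperiodic extension of \(\Theta\) to exist (the piecewise-linear interpolation constructed below furnishes one). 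Hence, if \(\tilde{\Theta}_\tau\) is not well-defined, then \(\Theta\) complies with no period length~\(\tau\), and both sides of the asserted equivalence are false. I may therefore assume \(\tilde{\Theta}_\tau\) to be well-defined.

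Two structural facts drive everything. Since every integer~\(i\) lies in \(\tilde{\mathbb{D}}_\tau\) with \(l_i\) and~\(i\) sharing a phase, well-definedness gives \(\tilde{\Theta}_\tau\kl{i} = \Theta\kl{i}\), so \(\Theta\) is the restriction of \(\tilde{\Theta}_\tau\) to~\urbild. Likewise, any \tauperiodic extension obeys \(\Phi\kl{\alpha} = \Phi\kl{\alpha \bmod \tau} = \Phi\kl{l_\alpha \bmod \tau} = \Phi\kl{l_\alpha} = \Theta\kl{l_\alpha} = \tilde{\Theta}_\tau\kl{\alpha}\) for every \(\alpha \in \tilde{\mathbb{D}}_\tau\), so \(\tilde{\Theta}_\tau\) is the restriction of \emph{every} such \(\Phi\) to \(\tilde{\mathbb{D}}_\tau\). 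All three functions share the endpoints \(0\) and \(n-1\), and any zigzag sequence (Def.~\ref{D:extremum}) lying in a subdomain is also one in a larger domain with identical values, so restriction cannot increase the number of extrema. The crux of the argument is the \emph{refinement lemma}: if \(\Xi\) is the restriction of \(\Xi'\) to a subset of \(\Xi'\)'s domain that still contains \(0\) and \(n-1\), then \(E\kl{\Xi} - C\kl{\Xi} \le E\kl{\Xi'} - C\kl{\Xi'}\). Applying it to the pairs \(\kl{\Theta, \tilde{\Theta}_\tau}\) and \(\kl{\tilde{\Theta}_\tau, \Phi}\) yields
\[
E\kl{\Theta} - C\kl{\Theta} \le E(\tilde{\Theta}_\tau) - C(\tilde{\Theta}_\tau) \le E\kl{\Phi} - C\kl{\Phi},
\]
where one checks directly from Def.~\ref{D:correction} that \(C\kl{\Theta} = 0\) for a purely integer-valued domain.

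Granting this chain, both directions fall out. If \(\Theta\) complies, the witnessing \(\Phi\) satisfies \(E\kl{\Phi} - C\kl{\Phi} = E\kl{\Theta}\), so the chain collapses to equalities and in particular \(E(\tilde{\Theta}_\tau) - C(\tilde{\Theta}_\tau) = E\kl{\Theta}\). Conversely, if \(E(\tilde{\Theta}_\tau) - C(\tilde{\Theta}_\tau) = E\kl{\Theta}\), I would take \(\Phi\) to be the piecewise-linear interpolation of the finitely many points of \(\tilde{\Theta}_\tau\). Because the phase pattern of \(\tilde{\mathbb{D}}_\tau\) repeats with period~\(\tau\) (and is merely truncated in the final partial period), this \(\Phi\) is \tauperiodic and extends~\(\Theta\); since all its breakpoints lie in \(\tilde{\mathbb{D}}_\tau\) and it is monotone between consecutive data points, it creates no extrema beyond those of \(\tilde{\Theta}_\tau\) and reproduces its behaviour on \(\kle{0,1}\) and \(\kle{n-2,n-1}\). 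Thus \(E\kl{\Phi} = E(\tilde{\Theta}_\tau)\) and \(C\kl{\Phi} = C(\tilde{\Theta}_\tau)\), whence \(E\kl{\Phi} - C\kl{\Phi} = E\kl{\Theta}\) and \(\Theta\) complies.

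The step I expect to be the main obstacle is the refinement lemma. Its plain consequences, namely \(E\kl{\Xi} \le E\kl{\Xi'}\) and \(C\kl{\Xi} \le C\kl{\Xi'}\), are routine; the difficulty is making the \emph{difference} \(E - C\) monotone, i.e.\ showing that each extra unit of correction is matched by a genuinely new zigzag turn. I would prove this by taking a longest zigzag of~\(\Xi\) and, for every undetectable initial (resp.\ final) extremum of~\(\Xi'\) witnessed by a point~\(\lambda\) that \(\Xi\)'s domain misses, splicing \(\lambda\) into the sequence as one additional turning point near the corresponding end; the sign conditions of Def.~\ref{D:correction} guarantee that \(\kl{0,\lambda,1}\) and its final-end analogue are admissible zigzag triples, and the two insertions occur at disjoint locations (below~\(1\) and above~\(n-2\)), so they combine additively. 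The bookkeeping that rules out double-counting and keeps the two ends independent — which is precisely what Def.~\ref{D:correction} was tailored to support — is where the combinatorial care lies.
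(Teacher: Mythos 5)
Your proposal is correct and takes essentially the same approach as the paper: the same squeeze chain \(E\kl{\Theta} - C\kl{\Theta} \leq E(\tilde{\Theta}_\tau) - C(\tilde{\Theta}_\tau) \leq E\kl{\Phi} - C\kl{\Phi}\) with \(C\kl{\Theta} = 0\) for the forward direction, the same observation that ill-definedness of \(\tilde{\Theta}_\tau\) rules out any \(\tau\)-periodic extension, and the same piecewise-linear, \(\tau\)-periodic interpolation of \(\tilde{\Theta}_\tau\) for the converse. The only difference is one of detail: the paper states your ``refinement lemma'' as an unproven opening remark, so your splicing sketch (including the case analysis you rightly flag as the delicate part) supplies detail the paper itself omits rather than a divergent route.
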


\(\tilde{\Theta}_\tau\) can only be ill-defined if there are \(i,j \in \urbild\) with \(i \neq j\) and \(i \bmod \tau = j \bmod \tau\), which in turn happens, iff \(\tau \in \Fn\), where \(\Fn \subset \mathbb{Q}\) is the set of fractions with a numerator smaller than~\(n\), i.e., the element-wise inverse of the \(n\)-th Farey sequence~\(\mathbb{F}_n\) (ignoring the latter's restriction to \(\kle{0,1}\)).

Instead of determining \(E(\tilde{\Theta}_\tau)\), it often suffices to regard the number of extrema of the foldation~\(\breve{\Theta}_\tau\), --~i.e., the restriction of~\(\tilde{\Theta}_\tau\) to \(\left [ 0,\tau \right ) \cap \tilde{\mathbb{D}}_\tau\)~-- and use it to extrapolate a lower bound for \(E(\tilde{\Theta}_\tau)\), namely:
\begin{equation}\label{eq:extrapolate}
L_\tau\kl{\Theta} \defi E(\breve{\Theta}_\tau) \floor{\frac{n}{\tau}} + B(\breve{\Theta}_\tau) \kl{\floor{\frac{n}{\tau}} -1} \leq E(\tilde{\Theta}_\tau),
\end{equation}
where \(B(\breve{\Theta}_\tau) \in \klg{0,1,2}\) is the number of extrema that are so close to multiples of \(\tau\) that they are not accounted for by \(E(\breve{\Theta}_\tau)\).
Moreover, if \(\Theta\) does not comply with a period length~\(\tau\), \(L_\tau\kl{\Theta}\) is often much larger than \(E(\Theta)\) (see also Fig.~\ref{fig:Beispiele}).
In this case it suffices to regard a few values of the foldation to reject that \(\Theta\)~complies with a period length~\(\tau\).

To determine the values of \(\breve{\Theta}_\tau\), functions with the following property are useful:

\begin{definition}
	We say that a function \(I\) \emph{sorts the first~\(b\) integers modulo~\(\tau\)} iff it maps \(\klg{0,\ldots,b-1}\) to a permutation of itself and
	\[I\kl{i} \bmod \tau \leq I\kl{i+1} \bmod \tau \quad \forall i \in \klg{0, \ldots, b		-2}.\]
\end{definition}

\noindent The function \(I\) that sorts the first~\(n\) integers modulo~\(\tau\) is unique and thus strictly monotonically increasing, if and only if \(\tau \notin \Fn\).
In this case, the values of \(\Theta \circ I\) are identical to the values of \(\breve{\Theta}\) and assumed in the same order and, in particular, \(E\kl{\Theta \circ I} = E(\breve{\Theta}_\tau)\).
Moreover, we only need \(\klg{\Theta\kl{I\kl{0}}, \ldots, \Theta\kl{I\kl{n-1}}}\), \(n\), and \(I^{-1}\kl{n-1}\) to determine the value sequence of \(\tilde{\Theta}_\tau\) and thus \(E(\tilde{\Theta}_\tau)\).
With additional knowledge of \(I^{-1}\kl{1}\) and \(I^{-1}\kl{n-2}\) we can also determine~\(C(\tilde{\Theta}_\tau)\).
As knowing \(E(\tilde{\Theta}_\tau)\) and~\(C(\tilde{\Theta}_\tau)\) (as well as \(E\kl{\Theta}\)) suffices to decide whether \(\Theta\)~complies with a period length~\(\tau\) (Th.~\ref{T:main}), the latter only depends on~\(I\), with no further explicit dependence on~\(\tau\).

\subsection{Deciding about periodicity with period lengths within small intervals}\label{interval}

If we know \(\tau\), we can easily find a function that sorts the first~\(n\) integers modulo~\(\tau\) by sorting.
For ranges of \(\tau\), we require the following theorem (see Appendix~\ref{proofT2} for a proof):

\begin{theorem}\label{T:sorting}
	Let \(1<\tau<n\) and let \(\frac{p}{q}\) be the largest  and~\(\frac{r}{s}\) be the smallest reduced fraction from \Fn such that \(\frac{p}{q} < \tau \leq \frac{r}{s}\).
	Define \(I_{n,\tau}\kl{i} \defi ir \bmod \kl{p+r}\) for \(i\in \klg{0,\ldots,p+r}\).
	Then \(I_{n,\tau}\) sorts the first~\(p+r\) integers modulo~\(\tau\).
	Moreover, \(I_{n,\tau}\) increases strictly monotonically on \urbild if \(\tau<\frac{r}{s}\).
\end{theorem}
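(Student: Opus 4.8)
The plan is to read off the behaviour of $I_{n,\tau}$ from the arithmetic of the Farey neighbours $\frac pq$ and $\frac rs$. First I would collect the facts I need about them: since they are consecutive in \Fn, unimodularity of Farey neighbours gives $rq-ps=1$, whence $\gcd\kl{r,p+r}=\gcd\kl{r,p}=1$; and their mediant $\frac{p+r}{q+s}$ lies strictly between them, so necessarily $p+r\geq n$ (otherwise the mediant, having numerator $<n$, would be an element of \Fn strictly between the two supposedly consecutive neighbours). The relation $\gcd\kl{r,p+r}=1$ immediately shows that $i\mapsto ir\bmod\kl{p+r}$ permutes \klg{0,\ldots,p+r-1}, which settles the permutation half of ``sorts the first $p+r$ integers''; the inequality $p+r\geq n$ guarantees that \urbild is a subdomain, so the final claim (strict monotonicity on \urbild) follows from whatever I prove on \klg{0,\ldots,p+r-1}.

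Next I would compute the two possible increments modulo~$\tau$. Using $\frac pq<\tau\leq\frac rs$, $rq-ps=1$, and $p\geq1$, the neighbour inequalities should force $\floor{r/\tau}=s$ and $\floor{p/\tau}=q-1$, so that $r\bmod\tau=r-s\tau\ifed\delta$ with $0\leq\delta<\tau$ (and $\delta=0$ exactly when $\tau=\frac rs$), while $p\bmod\tau=\tau-\eta$ with $\eta\defi q\tau-p$ and $0<\eta\leq\tau$. The purpose of this is that the recursion $I_{n,\tau}\kl{i+1}=\kl{I_{n,\tau}\kl{i}+r}\bmod\kl{p+r}$ takes one of exactly two forms: either $I_{n,\tau}\kl{i+1}=I_{n,\tau}\kl{i}+r$ when $I_{n,\tau}\kl{i}<p$, which modulo~$\tau$ adds~$\delta$; or $I_{n,\tau}\kl{i+1}=I_{n,\tau}\kl{i}-p$ when $I_{n,\tau}\kl{i}\geq p$, which modulo~$\tau$ adds~$\eta$. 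Both increments are non-negative, which is the mechanism driving monotonicity.

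The core is then a counting-and-no-wraparound step. Because $I_{n,\tau}$ runs bijectively through \klg{0,\ldots,p+r-1}, exactly $r$ of the indices satisfy $I_{n,\tau}\kl{i}\geq p$, so over a full cycle there are $p$ steps of ``$+\delta$'' type and $r$ of ``$+\eta$'' type, with total contribution $p\delta+r\eta=\tau\kl{rq-ps}=\tau$ — exactly one lap of the circle of circumference~$\tau$. To deduce that $i\mapsto I_{n,\tau}\kl{i}\bmod\tau$ is non-decreasing I must rule out any premature wraparound. I would do this by induction, showing that $I_{n,\tau}\kl{i}\bmod\tau$ equals the \emph{unreduced} partial sum $a_i\delta+b_i\eta$, where $a_i,b_i$ count the two step types among the first~$i$ steps. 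The decisive estimate is that the very last step is always of ``$+\eta$'' type, since $I_{n,\tau}\kl{p+r-1}=\kl{p+r-1}r\bmod\kl{p+r}=p\geq p$; hence $b_i\leq r-1$ for all $i\leq p+r-1$, and with $a_i\leq p$ this gives $a_i\delta+b_i\eta\leq p\delta+\kl{r-1}\eta=\tau-\eta<\tau$, so the partial sum stays strictly below~$\tau$ and no reduction is ever triggered. This simultaneously establishes the sorting property and shows the sequence is \emph{strictly} increasing precisely when $\delta>0$, i.e.\ when $\tau<\frac rs$, yielding the final claim.

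I expect the main obstacle to be exactly this no-wraparound bookkeeping: the naive picture ``each step adds a small non-negative amount'' is only faithful if no partial sum of increments reaches~$\tau$ early, and making that rigorous needs both the unimodular identity $p\delta+r\eta=\tau$ and the observation that the single wrap-inducing (cycle-closing) step is of the larger-shift ``$+\eta$'' type. The degenerate boundary case $\tau=\frac rs$, where $\delta=0$, must also be tracked with care: there collisions occur, $I_{n,\tau}$ only sorts in the non-strict sense, and this is precisely the boundary that separates the general sorting statement from the strict-monotonicity refinement under $\tau<\frac rs$.
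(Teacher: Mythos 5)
Your proof is correct and takes essentially the same route as the paper's: your step counts \(a_i\) and \(b_i\) are exactly the paper's \(k-d_k\) and \(d_k\) (with \(d_k=\lfloor kr/(p+r)\rfloor\)), your increments \(\delta\) and \(\eta\) are its \(r\bmod\tau\) and \(-p\bmod\tau\), your no-wraparound bound \(a_i\delta+b_i\eta\leq p\delta+(r-1)\eta<\tau\) is precisely its key inequality resting on the same unimodular identity \(rq-ps=1\), and the dichotomy \(\delta=0\Leftrightarrow\tau=\frac{r}{s}\) yields the same strict-versus-weak monotonicity conclusion. The only differences are presentational---you obtain the coefficient bounds by induction on the two-case recursion rather than directly from the floor formula, and you make explicit (via the mediant) the fact \(p+r\geq n\) that the paper leaves implicit.
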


It follows that, for two successive elements \(\mathfrak{a}\) and~\(\mathfrak{b}\) of \Fn with \(\mathfrak{a}<\mathfrak{b}\), \(I_{n, \tau}\) is the same for all \(\tau \in \leftopen{\mathfrak{a},\mathfrak{b}}\).
Therefore, if \(\Theta\)~complies with one period length \(\tau\) in \(\kl{\mathfrak{a}, \mathfrak{b}}\), it complies with all period lengths in that interval.
Moreover, if \(\Theta\) complies with a period length~\(\mathfrak{b}\), it also complies with all period lengths in \(\kl{\mathfrak{a}, \mathfrak{b}}\) and thus it suffices to investigate open intervals of adjacent elements of~\Fn to decide about the periodicity of~\(\Theta\).
As a consequence, all we need to do to check whether \(\Theta\)~complies with period lengths in \(\kl{\mathfrak{a}, \mathfrak{b}}\) is to count the extrema of one foldation.

In another consequence, Th.~\ref{T:sorting} allows us to iterate over the values of \(I_{n,\tau}\) and thus of \(\Theta \circ I_{n,\tau}\) without explicitly calculating~\(I_{n, \tau}\kl{i}\) for all \(i \in \urbild\).
As we mentioned in the previous subsection, a few such iterations often suffice to reject that \(\Theta\)~complies with a period~\(\tau\).

\subsection{Deciding about periodicity with an unknown period length}\label{anylength}

If we have little prior knowledge or constraints on the value of a possible period length~\(\tau\), testing every possible interval of adjacent elements of \Fn will usually be unfeasible because \(\big|\Fn\big| = \mathcal{O}\big(n^2\big)\) as \(n \rightarrow \infty\) \cite{*[{Ex.~8.4.i in }][] Vardy1991}.
To avoid this, we can make use of the following:
First, if \(\Theta\)~complies with a period length~\(\tau\), so does any segment of~\(\Theta\).
Second, if the length~\(\breve{n}\) of such a segment is sufficiently smaller than~\(n\), the interval of adjacent elements of~\(\tilde{\mathbb{F}}_{\breve{n}}\) in which a given~\(\tau\) lies is larger than the analogous interval for~\Fn.
Third, for each two adjacent elements on some level of the Stern--Brocot tree\footnote{The Stern--Brocot~\cite{Graham1989} tree is a tree spanning all reduced fractions, which can be recursively defined as follows:
Let \(\mathfrak{s}^i_1, \ldots, \mathfrak{s}^i_{2^i+1}\) denote the fractions on the \(i\)-th level of the tree.
Then \(\mathfrak{s}^0_1 \defi \tfrac{-1}{0}\),
\(\mathfrak{s}^0_2 \defi \tfrac{1}{0}\),
\(\mathfrak{s}^i_{2j} \defi  \mathfrak{s}^{i-1}_j\),
and \(\mathfrak{s}^i_{2j+1} \defi \mathfrak{M}\big(\mathfrak{s}^{i-1}_j, \mathfrak{s}^{i-1}_{j+1}\big)\),
where \(\mathfrak{M}\) denotes the \emph{mediant:} \(\mathfrak{M}\big(\tfrac{x}{y},\tfrac{z}{w}\big) \defi \tfrac{x+z}{y+w}\).
Each level contains fractions in ascending order, i.e, \(\mathfrak{s}^i_1 < \ldots < \mathfrak{s}^i_{2^i+1}\).}\nocite{Graham1989}, there is some~\(o\) such that they are adjacent elements of~\(\tilde{\mathbb{F}}_o\).
Combining these three facts, we can make a nested-interval search on the Stern--Brocot tree for a~\(\tau\) with which \(\Theta\) complies, as follows:

\begin{figure}
	\includegraphics[width=\linewidth]{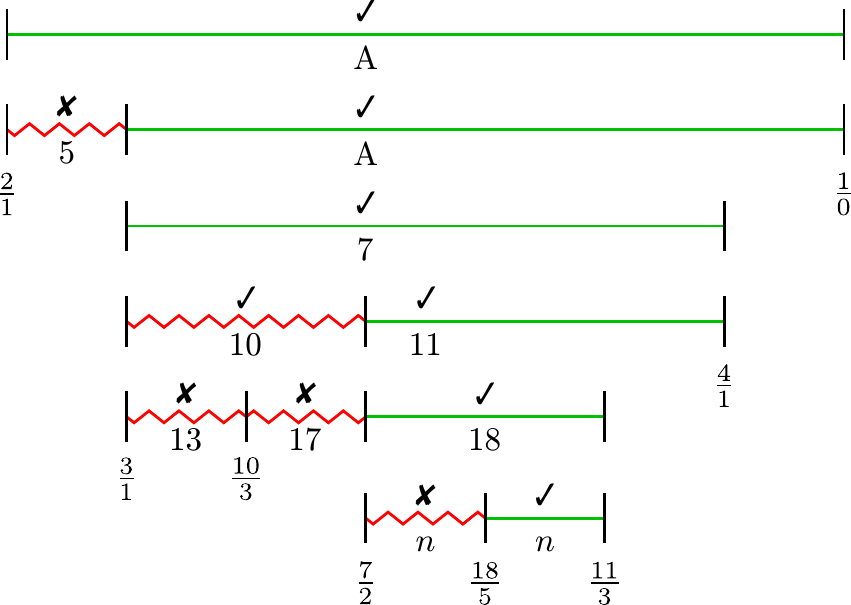}
	\caption{Illustration of a possible realization of the nested-interval search for a time series with \(18 < n \leq 25\) that complies with period lengths in \(\leftopen{\tfrac{18}{5},\tfrac{11}{3}}\).
	Each level corresponds to one level of the Stern--Brocot tree.
	The numbers below the intervals indicate the length of the segment checked in step \ref{finaltest} or~\ref{quicktest} of the procedure for the respective interval, i.e., \(n\) or~\(\breve{n}\), respectively.
	``A''~indicates an automatic pass due to \(\tfrac{1}{0}\) being the interval's right border.
	Checkmarks indicate whether this check was positive or negative; only displayed intervals were checked at all.
	Green, straight lines indicate that the test found \(\Theta\)~to comply with some period length in the respective interval; red, zigzagged lines indicate that it did not.
	The check on the interval \(\leftopen{\tfrac{3}{1},\tfrac{7}{2}}\) is an example for a false positive result in step~\ref{quicktest}.
	Intervals are not displayed to scale.
	\label{fig:Algorithmus}
}
\end{figure}

\textbf{Method:} To test whether \(\Theta\)~complies with some period length in \(\leftopen{\tfrac{p}{q}, \tfrac{r}{s}}\) with \(\tfrac{p}{q}\) and \(\tfrac{r}{s}\) being successive elements of \(\tilde{\mathbb{F}}_m\) for some \(m \leq n\):
\renewcommand{\theenumi}{\arabic{enumi}}
\renewcommand{\labelenumi}{(\theenumi)}
\begin{enumerate}
	\item \label{finaltest} If \(p+r \geq n\): Check whether \(\Theta\)~complies with period lengths from \(\kl{\tfrac{p}{q}, \tfrac{r}{s}}\) (using Th.~\ref{T:main}).\\
	If yes, the test is positive.
	If no, the test is negative.
	\item \label{quicktest} If \(p+r < n\), check whether some segment of~\(\Theta\) of length \(\breve{n} \defi p+r\) complies with period lengths in \(\kl{\tfrac{p}{q}, \tfrac{r}{s}}\) (using Th.~\ref{T:main}).
	If \(\tfrac{r}{s}=\tfrac{1}{0}\), this check can be skipped as it is always positive.\\
	If no, the test is negative.
	If yes, continue.
	\item Use this method to test whether \(\Theta\)~complies with some period length in \(\leftopen{\tfrac{p}{q}, \tfrac{p+r}{q+s}}\).\\
	If yes, the test is positive.
	If no, continue.
	\item Use this method to test whether \(\Theta\)~complies with some period length in \(\leftopen{\tfrac{p+r}{q+s}, \tfrac{r}{s}}\).\\
	If yes, the test is positive.
	If no, the test is negative.
\end{enumerate}
Without any prior knowledge or constraints on the value of a possible period length~\(\tau\), the above test can be applied to \(\leftopen{\tfrac{2}{1}, \tfrac{1}{0}}\), i.e., with \(p=2\), \(q=r=1\), and \(s=0\), neglecting all intervals whose smallest value exceeds some \(\tau_\text{max} \leq n-1\).
A realization of this procedure is illustrated in Fig.~\ref{fig:Algorithmus}.

Some remarks on the implementation:
\begin{itemize}
	\item If the check in step~\ref{quicktest} yielded a false positive result (such as for the interval \(\leftopen{\tfrac{3}{1},\tfrac{7}{2}}\) in Fig.~\ref{fig:Algorithmus}), this would not affect the total outcome of the test, as all subintervals will be tested again (in step~\ref{finaltest} or~\ref{quicktest}) at a higher recursion level.
	(Some segment of~\(\Theta\) may comply with a period length~\(\tau\) while \(\Theta\) itself does not, anyway.)
	This can be employed to make an implementation more effective by using the number of extrema of the foldation of the respective segment to extrapolate the number of extrema of~\(\tilde{\Theta}_\tau\) and comparing it to \(E\kl{\Theta}\) (see Eq.~\ref{eq:extrapolate}).
	This way, calculating or storing the number of local extrema of each segment can be avoided.
	\item We empirically found that the runtime was increased if we used a random segment of length \(p+r\) instead of the first one in step~\ref{quicktest}.
	\item As the interval \(\leftopen{\tfrac{m}{1},\tfrac{m+1}{1}}\) for some \(m\leq n\) resides at the \(\kl{m-1}\)-st level of the employed branch of the Stern--Brocot tree, a fully recursive implementation can result in a problematically high recursion depth for large \(n\) and \(\tau_\text{max}\).
	This can be avoided by applying the above test to the intervals \(\leftopen{\tfrac{2}{1},\tfrac{3}{1}}, \leftopen{\tfrac{3}{1},\tfrac{4}{1}}, \ldots, \leftopen{\tfrac{\ceil{\tau_\text{max}}-1}{1},\tfrac{\ceil{\tau_\text{max}}}{1}}\) in a non-recursive manner.
	\item The above method can also be used to find the shortest period length that \(\Theta\)~complies with.
	For obtaining an estimate with an error margin for the shortest period length of the underlying process, one usually would want to know the first maximal interval such that \(\Theta\)~complies with every period length in that interval.
	To avoid underestimating the margin, one needs to consider that if \(\mathfrak{c}\) is the successor of the successor of \(\mathfrak{a}\) in~\Fn, \(\Theta\)~may comply with all period lengths in \(\leftopen{\mathfrak{a},\mathfrak{c}}\).
	Therefore, after finding an interval such that \(\Theta\)~complies with every period length in that interval, the next interval of adjacent elements of \Fn needs to be checked as well.
\end{itemize}

In the following, we consider and use an implementation that takes all of the above into account.
The asymptotic runtime of our method scales linear with~\(n\), if \(\Theta\) complies with a period length \(\tau \ll n\) and quadratic with \(\tau_\text{max}\) otherwise (see Appendix~\ref{runtime}).

\subsection{Accounting for small errors}\label{errors}
In this subsection, we describe a simple expansion of our method that is capable of allowing for errors of the time series~\(\Theta\) that are bounded and small in comparison to \(\Theta\)'s features of interest.
Such errors could originate from small numerical inaccuracies, e.g., of a solver for differential equations, or small measurement errors.
We handle such errors by treating two values of~\(\Theta\) that are less apart than a given \emph{error allowance}~\(\sigma\) as identical for the purpose of determining local extrema.
More precisely, we weaken our criterion (Def.~\ref{D:periodic}) by changing Defs. \ref{D:extremum} and~\ref{D:correction} as follows:
\begin{itemize}
	\item In Def.~\ref{D:extremum}, we additionally require that \(\abs{\Xi\kl{\gamma_{i+1}}-\Xi\kl{\gamma_i}} > \sigma\) for all \(i \in \klg{0, \ldots, k}\).
	\item In Def.~\ref{D:correction}, if \(i\) and~\(j\) with \(i<j\) are the smallest elements of \urbild such that \(\abs{\Xi\kl{j}-\Xi\kl{i}}>\sigma\), we say that \(\Xi\) has \emph{undetectable initial extrema} if there exist \(\lambda, \mu \in \mathbb{D} \cap \kle{0,j}\) such that \(\abs{\mu-\lambda}>\sigma\) and \(\sgn\kl{\mu - \lambda} \neq \sgn\kl{\Xi\kl{j}-\Xi\kl{i}}\).
\end{itemize}
For \(\sigma=0\), the adapted Def.~\ref{D:extremum} is equivalent to the original one, and so is the adapted Def.~\ref{D:correction}, unless \(\Theta\kl{0}=\Theta\kl{1}\) or \(\Theta\kl{n-2}=\Theta\kl{n-1}\).
The remainder of Sections \ref{criterion} to \ref{anylength} as well as the appendices analogously apply to the expanded method.

\section{Application}\label{application}

\subsection{Data with known periodicities or aperiodicities}\label{archetypes}

\newcommand{\ttau}{\ensuremath{\tilde{\tau}}\xspace}

\begin{figure}
	\includegraphics{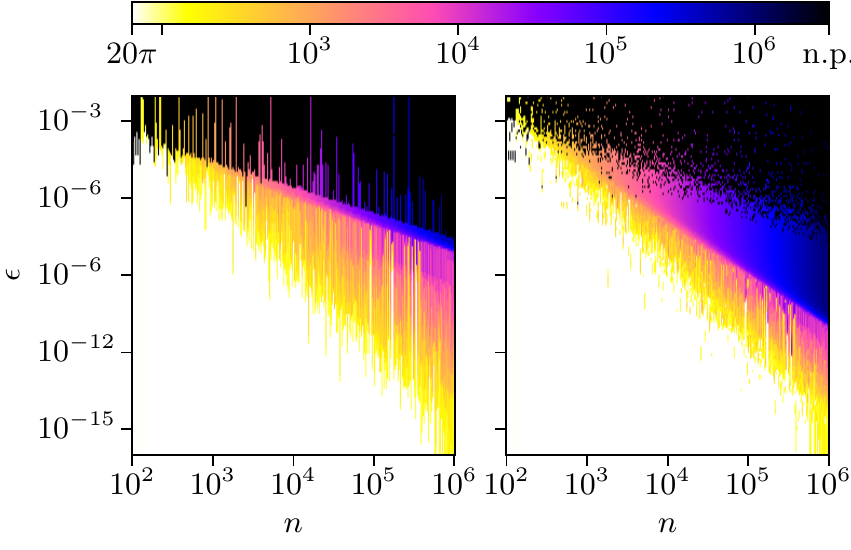}
	\caption{Smallest period length~\(\tau\leq n-1\) with which the aperiodic time series \(\Upsilon_{\epsilon n \ttau}\) (left) or \(\Psi_{\epsilon n \ttau}\) (right) complies for \(\ttau = 20\pi\).
	Black points (n.p.) indicate that no such period length exists.}
	\label{fig:n-eps-found_tau}
\end{figure}

To investigate the performance of our test, respectively, we employed the following families of time series that deviate from periodic ones to an adjustable extent:
\begin{align}
	\begin{split}
	&\Upsilon_{\epsilon n \ttau}: \urbild \rightarrow \mathbb{R}\\
	&\Upsilon_{\epsilon n \ttau}\kl{t} \defi \kl{1+\epsilon \frac{2 \pi t}{\ttau}} \cos\kl{\frac{2 \pi t}{\ttau}},
	\end{split}\\
	\begin{split}
	&\Psi_{\epsilon n \ttau}: \urbild \rightarrow \mathbb{R}\\
	&\Psi_{\epsilon n \ttau}\kl{t} \defi \cos\kl{\frac{\frac{2 \pi t}{\ttau}}{1+\epsilon \frac{2 \pi t}{\ttau}}},
	\end{split}
\end{align}
with \(\epsilon, \ttau \in \mathbb{R}^+\).
The time series~\(\Upsilon_{\epsilon n \ttau}\) features a rising amplitude; the time series \(\Psi_{\epsilon n \ttau}\) features a rising period length.
The parameter~\(\epsilon\) determines how quickly the amplitude or period length, respectively, is rising, i.e., how strongly the time series deviates from a periodic one (at \(\epsilon=0\)), whose period length is~\ttau.

In Fig.~\ref{fig:n-eps-found_tau}, we show the smallest period lengths~\(\tau\) with which these time series comply depending on \(n\) and~\(\epsilon\) for \(\ttau = 20\pi\).
For both families of time series, we find that \(\tau \approx \ttau\) for small \(n\) and~\(\epsilon\), more precisely for \(\epsilon n^3 \lessapprox 10^2\), i.e., for a relative amplitude/frequency change smaller than \(\tfrac{10}{n^2}\).
For higher \(n\) and~\(\epsilon\), the period length~\(\tau\) tendentially increases with \(\epsilon\) and~\(n\), with \(\tau\)~obtaining higher values in general for \(\Psi_{\epsilon n \ttau}\) than for \(\Upsilon_{\epsilon n \ttau}\).
Most time series with high \(n\) and~\(\epsilon\) (roughly: \(\epsilon n > 10^{-2}\), relative amplitude/frequency change bigger than \(10^{-3}\)) do not comply with any period length smaller than \(n-1\) and if they do, \(\tau\)~is close to~\(n\).
For \(\ttau=2\pi\), \(\tau\) exhibits comparable dependencies on \(n\) and~\(\epsilon\), however, with some anomalies due to aliasing effects (not shown).

These results show that our test is capable of detecting small deviations from periodicity.
However, there is a considerable set of parameters for which periodicity with a long period length is detected.
Moreover, even time series with strong deviations from periodicity sometimes comply with period lengths close to~\(n\).
This can be explained by the fact that a time series~\(\Theta\) can easily ``accidentally'' comply with a period length close to its length~\(n\), e.g., for \(\Theta\) to comply with period lengths from \(\kl{n-2,n-1}\), it suffices that \(\Theta\kl{n-1}\) lies between \(\Theta\kl{0}\) and \(\Theta\kl{1}\).
This demonstrates the importance of choosing \(\tau_\text{max}\) properly.

\begin{figure}
	\includegraphics{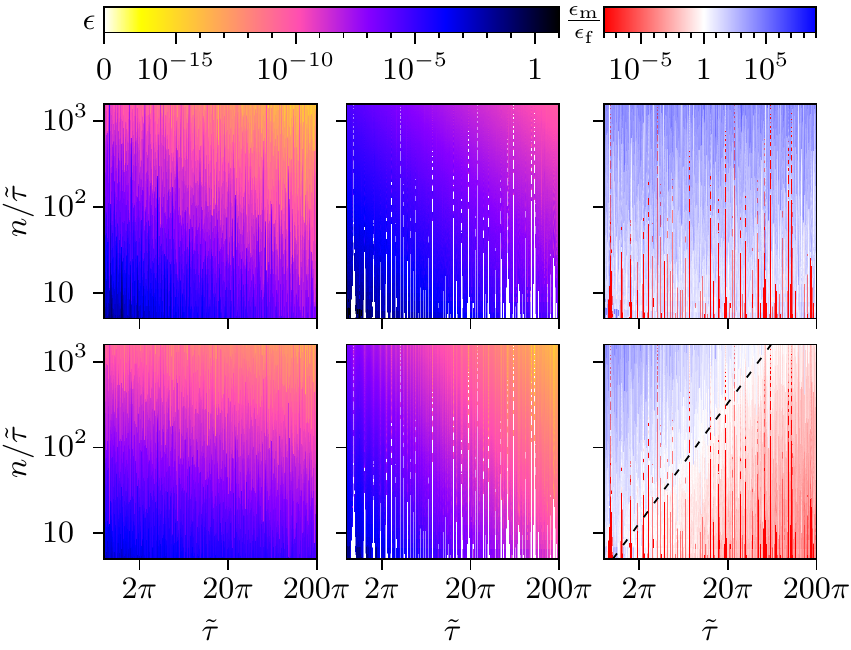}
	\caption{Left column: Smallest \(\epsilon_\text{f}\) such that the aperiodic time series \(\Upsilon_{\epsilon_\text{f} n \ttau}\) (top) or \(\Psi_{\epsilon_\text{f} n \ttau}\) (bottom) does not comply with a period length less than 1.5\,\ttau as per our \textbf{f}oldation-based test.
	Central column: Smallest \(\epsilon_\text{m}\) such that \(\Upsilon_{\epsilon_\text{m} n \ttau}\) (top) or \(\Psi_{\epsilon_\text{m} n \ttau}\) (bottom) are aperiodic as per the \textbf{m}arker-event test.
	Right column: Quotient of \(\epsilon_\text{m}\) and~\(\epsilon_\text{f}\).
	For~\(\Upsilon\), \(\epsilon_\text{m}\)~is larger than~\(\epsilon_\text{f}\) everywhere, except if \(\epsilon_\text{m}=0\) (see top middle panel).
	For~\(\Psi\), \(\epsilon_\text{m}\)~is larger than~\(\epsilon_\text{f}\) in the top left half of the diagram (except if \(\epsilon_\text{m}=0\)) and smaller in the bottom right half.
	The dashed black line in the bottom right panel marks the parameters investigated in Fig.~\ref{fig:compare_for_noise}.}
	\label{fig:n-tau-eps}
\end{figure}

We now compare our test's performance to a test based on Poincar\'e sections or marker events~\cite{Pikovsky2001}, which we refer to as \emph{marker-event test.}
As marker events we employ on the one hand the upward zero crossings of a piecewise linear interpolation of the time series and on the other hand the time series's local maxima.
We consider a time series periodic according to the marker-event test, if neither the distances of subsequent zero crossings nor the amplitudes of local maxima are significantly correlated with time (with a significance level of \(0.05\) as per Kendall's correlation coefficient).
To determine the period length with this test, we use the mean of the distances of subsequent zero crossings, and as its confidence interval, we use twice the standard error.

As a first benchmark, we use the lowest deviation from periodicity \(\epsilon\) for which the test detects \(\Upsilon_{\epsilon n \ttau}\) or \(\Psi_{\epsilon n \ttau}\), respectively to be aperiodic (Fig.~\ref{fig:n-tau-eps}).
We find both tests to be more specific for higher \(n\) and~\ttau in general.
However, the marker-event test detects even the purely sinusoidal \(\Upsilon_{0 n \ttau}=\Psi_{0 n \ttau}\) to be aperiodic in some instances, which is expected given the possibility of type~I errors by the statistical test.
Apart from these cases, our test is generally capable of detecting smaller deviations from periodicity in the form of increasing amplitude~(\(\Upsilon\)) than the marker-event test (blue points in Fig.~\ref{fig:n-tau-eps}, top right).
For changes of the period length~(\(\Psi\)), the marker-event test performs better for high~\(\tau\) and small~\(n\), namely for \(n \lessapprox \ttau^{2.4}\).

\begin{figure}
	\includegraphics{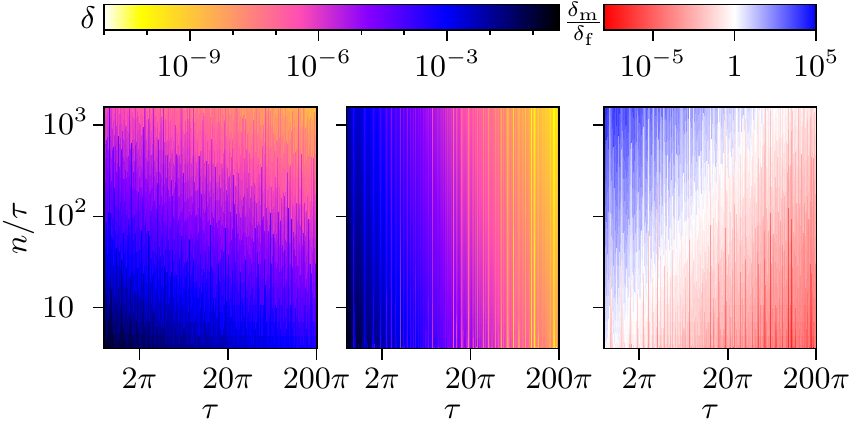}
	\caption{Left: Relative size~\(\delta_\text{f}\) of the confidence interval for the period length for our test for sinusoidal time series with period length~\(\tau\) and \(n\)~sampling points.
	(\(\delta_\text{f} \defi \frac{\mathfrak{a}-\mathfrak{b}}{\tau}\), with \(\kl{\mathfrak{a},\mathfrak{b}}\) being the first maximal interval such that the time series complies with every period length in that interval.)
	Center: relative size~\(\delta_\text{m}\) of the confidence interval for the marker-event test.
	(\(\delta_\text{m} \defi \frac{4 \varsigma}{\tau}\), with \(\varsigma\)~being the standard error of the distances between adjacent upward zero crossings.)
	Right: Quotient between the two.
	\(\delta_\text{m}\) is larger than \(\delta_\text{f}\) in the top left half and smaller in the bottom right half.}
	\label{fig:period_length}
\end{figure}

As a second benchmark, we regarded the error margin of the estimate of the period length for sinusoidal time series (Fig.~\ref{fig:period_length}).
For both tests, we find the margin to decrease with \(\tau\) and~\(n\), however, the latter decrease is small for the marker-event test.
For high~\(\tau\) and low~\(n\), the error margin is higher for our test (blue points in Fig.~\ref{fig:period_length}, right), while it is higher for the marker-event test for low~\(\tau\) and high~\(n\), more precisely for \(n \lessapprox \tau^{2.3}\).
For 5305 of the 90000 time series analyzed for Fig.~\ref{fig:period_length}, the actual period length~(\(\tau\)) did not lie inside the marker-event test's error margin, while it did always lie within the margin for our test.

Our results show that our test outperforms the marker-event test for coarse sampling and a high number of data points as well as for rising amplitudes.
Moreover, it has no false positives and can thus be regarded to be more robust.
Note that the marker-event test used here was tailored to the investigated time series (by assuming one upward zero crossing and one local maximum per period) and to the types of deviations from periodicity (by assuming a rising amplitude or period length), while our test requires no comparable adjustment.
Finally, if \(\tau_\text{max}\) is not chosen too high, our test's asymptotic run-time behavior is better than that of the marker-event test, which is \(\mathcal{O}\kl{n\log\kl{n}}\) due to the correlation coefficient~\cite{Christensen2005}.

\subsection{Noisy data}\label{noisy}

\begin{figure}
	\includegraphics{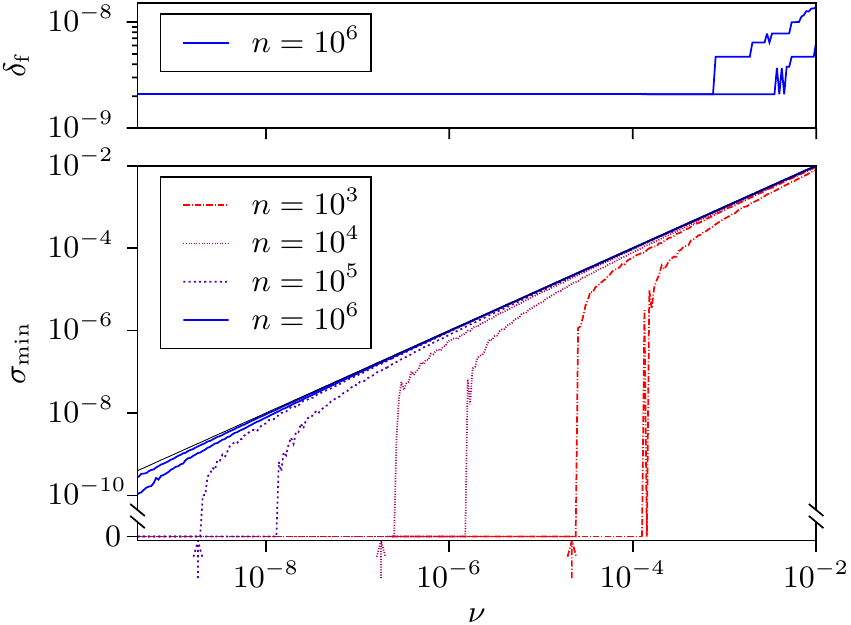}
	\caption{Bottom: Smallest error allowance \(\sigma_\text{min}\) for which sinusoidal time series of length~\(n\) with a period length \(\tau=200\pi\) that were contaminated with white noise from \(\mathcal{U}\kl{\kle{0,\nu}}\) comply with a period length of~\(\tau\).
	The black solid line indicates \(\sigma_\text{min}=\nu\).
	Arrows indicate the third-smallest distance between adjacent values of the \taufoldation.
	Top: Relative size~\(\delta_\text{f}\) of the confidence interval for the period length (cf. Fig.~\ref{fig:period_length}) for \(n=10^6\) and an error allowance \(\sigma=\nu\).
	For \(n=10^5\), a comparable behavior was observed (with \(\delta_\text{f}\)~being generally larger); for \(n=10^4\) and \(n=10^3\), \(\delta_\text{f}\) was constant for \(10^{-10}<\nu<10^{-2}\).
	In both plots, the 5\textsuperscript{th} and 95\textsuperscript{th} percentile over 200 realizations of the noise are shown.}
	\label{fig:noise}
\end{figure}

To investigate the impact of erroneous data on our test, we first apply it to sinusoidal time series that are contaminated with white noise from \(\mathcal{U}\kl{\kle{0,\nu}}\), with \(\mathcal{U}\) denoting the uniform distribution.
In the bottom part of Fig.~\ref{fig:noise}, we show the minimum error allowance \(\sigma_\text{min}\) (see Sec.~\ref{errors}) that needs to be made for such a time series to comply with the correct period length.
We find that the noise does not affect the result up to a certain noise amplitude \(\nu_\text{crit}\).
For up to roughly \(6\nu_\text{crit}\), the noise's impact is strongly realization-dependent.
For higher noise amplitudes, \(\sigma_\text{min}\) is slightly smaller than the noise amplitude~\(\nu\).
We explain these regimes as follows:
For \(\nu \lessapprox \nu_\text{crit}\), the distances between consecutive values of the foldation are larger than \(\nu\) and thus the noise cannot introduce additional local extrema.
This is confirmed by the observation that \(\nu_\text{crit}\) roughly corresponds to the third-smallest such distance (the smallest and second-smallest occur at local extrema, where changing the order of values does not affect the test's outcome; see arrows in Fig.~\ref{fig:noise}).
For \(\nu \gtrapprox 6\nu_\text{crit}\), the probability that the noise did not introduce any additional local extrema to the foldation becomes negligible and thus an error allowance of roughly \(\nu\) is needed for compensation.
Moreover, we find that for small noise levels the accuracy of the identified period length is not affected (see the top part of Fig.~\ref{fig:noise}).
We conclude that \(\sigma=\nu\) is an appropriate choice, given a known noise amplitude~\(\nu\).

\begin{figure}
	\includegraphics{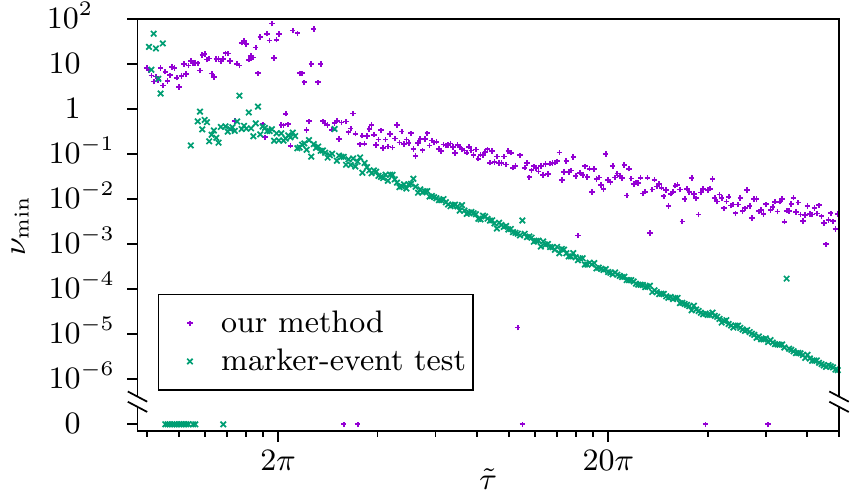}
	\caption{Minimum noise level~\(\nu_\text{min}\) such that \(\Psi_{4.1 \cdot \ttau^{-5.3}, \ttau^{2.4}, \ttau}\) (see text) contaminated with a white noise from \(\mathcal{U}\kl{\kle{-\frac{\nu}{2},\frac{\nu}{2}}}\) is (wrongly) detected to be periodic with a period length of about \(\ttau\) according to our test (with an error allowance~\(\sigma=\nu\)) or the marker-event test, respectively.}
	\label{fig:compare_for_noise}
\end{figure}

To evaluate our test's robustness against noise and to compare it with the marker-event test, we employ \(\Psi_{\epsilon n \ttau}\) contaminated with white noise \(\mathcal{U}\kl{\kle{-\frac{\nu}{2},\frac{\nu}{2}}}\).
To exclude other factors that may influence the relative performance of the tests, we set \(n=\ttau^{2.4}\), for which \(\epsilon_\text{m} \approx \epsilon_\text{f}\) for the uncontaminated time series (see the dashed line in the bottom right of Fig.~\ref{fig:n-tau-eps}).
For these cases we found that \(\epsilon_\text{m} \approx \epsilon_\text{f} \approx \Omega \kl{\ttau} = 4.1 \cdot \ttau^{-5.3}\), with the coefficients of the latter being obtained by a fit.
We chose \(\epsilon = \xi \Omega\kl{\ttau}\) with \(\xi=10\).
In Fig.~\ref{fig:compare_for_noise}, we show the minimum noise level \(\nu_\text{min}\) for which our test (with an error allowance~\(\sigma=\nu\)) or the marker-event test fail to detect the aperiodicity of \(\Psi_{\nu n \ttau}\) for the above conditions.
For our method, we observe \(\nu_\text{min}\) to slightly increase with~\ttau for low~\ttau, being higher than the amplitude of the uncontaminated time series.
The time series that were detected to be periodic were constant with respect to the error allowance~\(\sigma\), i.e., \(\max\kl{\Theta}-\min\kl{\Theta} < \sigma\).
Around \(\ttau \approx 2\pi\), \(\nu_\text{min}\) quickly decreases to about \(0.5\), after which it decreases more slowly with some power law.
We explain these two regimes as follows:
If \(\nu\) (and thus \(\sigma\)) is smaller than some value between \(0.5\) and~\(3\), periodicity is detected when the error allowance prevents the test from detecting the uncontaminated time series's deviations from periodicity.
If, however, \(\nu\) is larger, the noise dominates the original signal and the original signal acts like a contamination (and thus the condition that the errors are small in comparison to the time series' features of interest is not met anymore).
Under an error allowance \(\sigma=\nu\), the noise is constant in the terms of our test and thus periodic.
Thus, periodicity is only detected, when the noise amplitude becomes so high that the uncontaminated time series's influence on the noise becomes negligible.
For the marker-event test, \(\nu_\text{min}\) mostly follows a power law, the main exception being a few cases, in which \(\nu_\text{min}=0\).
We made comparable observations for \(\xi=100\) and \(\xi=1000\).
In general \(\nu_\text{min}\) is higher for our test, which indicates that it is less affected by noise, provided the error allowance~\(\sigma\) can be chosen to match the noise amplitude.

\subsection{Dynamical Systems}\label{ode}

\begin{figure}
	\includegraphics{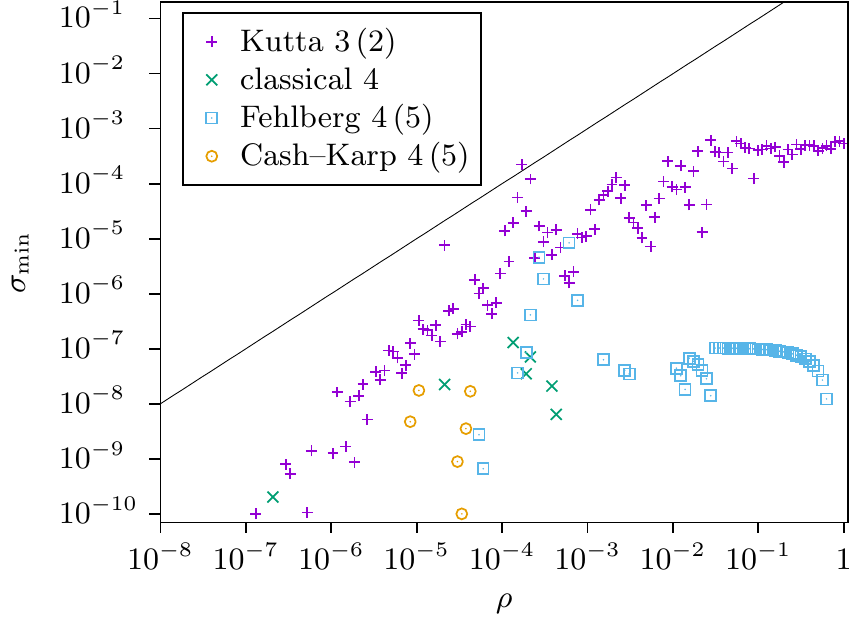}
	\caption{Minimum error allowance~\(\sigma_\text{min}\) such that a time series of length \(10^5\) generated by a model system of two coupled FitzHugh--Nagumo oscillators (see text) integrated with different adaptive Runge--Kutta methods with a maximum relative error~\(\rho\) and a maximum step size of~1 complied with some period length smaller than 500.
	The solid line indicates \(\sigma_\text{min}=\rho\).
	Only cases with \(\sigma_\text{min} \neq 0\) are depicted.
	The number of cases with \(\sigma_\text{min}=0\) were:
	Kutta:            37;
	classical:       153;
	Fehlberg:        112;
	Cash--Karp:      154;
	Dormand--Prince: 160 (all).
	}
	\label{fig:integratortest}
\end{figure}

To evaluate our method's performance on the analysis of dynamical systems, we apply it to a deterministic system of two diffusively coupled FitzHugh--Nagumo oscillators~\cite{FitzHugh1961, Ansmann2013, Karnatak2014}.
We employ a parameter range, in which this system exhibits several regimes of (periodic) mixed-mode oscillations (MMOs)~\cite{Desroches2012} separated by chaotic windows.
To describe these MMOs, we use the following notation: \(h_1^{l_1}h_2^{l_2}\ldots\), which indicates that one period consists of \(h_1\) high-amplitude oscillations, followed by \(l_1\) low-amplitude oscillations, followed by \(h_2\) high-amplitude oscillations, and so on.

We chose an initial condition near the attractor, discarded transients and integrated this system's dynamics for \(n=10^5\) time units, sampling each time unit, with several adaptive Runge--Kutta methods, namely Kutta's 3\textsuperscript{rd}-order method (using the midpoint method for error estimation)\footnote{In the GSL's source code this method is referred to as \emph{Euler--Cauchy}, but this name is predominantly used for the classical (1\textsuperscript{st}-order) Euler method.
Moreover, the 3\textsuperscript{rd}-order method seems to go back to Kutta, who derived it in Ref.~\onlinecite{Kutta1901}.}\nocite{Kutta1901}, the classical Runge--Kutta method (with step doubling used for error estimation), Fehlberg's 4\textsuperscript{th}-order method, the Cash--Karp method, and Dormand's and Prince's 8\textsuperscript{th}-order method -- all as implemented in the GNU Scientific Library~\cite{Galassi2009}.
We applied our test to the temporal evolution of the first oscillator's first dynamical variable (\(x_1\)~in Ref.~\onlinecite{Karnatak2014}), whose maximum absolute value was ca.~\(0.9\).
Due to the latter, the highest expected absolute integration error roughly corresponds to the relative integration accuracy~\(\rho\).

For a first analysis, we chose a coupling strength \(\kappa\) of \(0.17\), for which the system exhibits a \(1^2\) MMO, i.e., a periodic dynamics (see inset of Fig.~\ref{fig:bifurcation}).
We considered the test to be successful if the respective time series complied with some period length smaller than 500 --~with the period length of the system's dynamics being roughly 287.
In Fig.~\ref{fig:integratortest}, we show the minimum error allowance \(\sigma_\text{min}\) needed to be made for the test to be successful depending on the integration accuracy~\(\rho\).
For Kutta's 3\textsuperscript{rd}-order method, \(\sigma_\text{min}\) is mostly one order of magnitude smaller than~\(\rho\) for \(10^{-7} \lessapprox \rho \lessapprox 10^{-2}\) and only larger than \(\rho\) in one case.
For all other integration methods, \(\sigma_\text{min}\) is never larger than~\(\rho\) and deviates from~\(0\) for at most a few values of \(\rho\), which exhibit no discernible pattern (except for the Fehlberg method and \(\rho>10^{-2}\)).
While a detailed investigation of this phenomenon and why it does not affect Kutta's 3\textsuperscript{rd}-order method is beyond the scope of this study, we hypothesize that it can be explained as follows:
The cases with \(\sigma_\text{min}=0\) are due to the fact that the integration error is not stochastic but systematic in nature and therefore likely to affect adjacent values of the foldation in a comparable way, thus not affecting their order.
We hypothesize that this effect is diminished if the step size is frequently adapted, which leads to the seemingly random deviations of \(\sigma_\text{min}\) from~0.
From the above results and our results from the previous subsection, we conclude that the maximum expected absolute integration error is a good choice for~\(\sigma\).

\begin{figure}
	\includegraphics{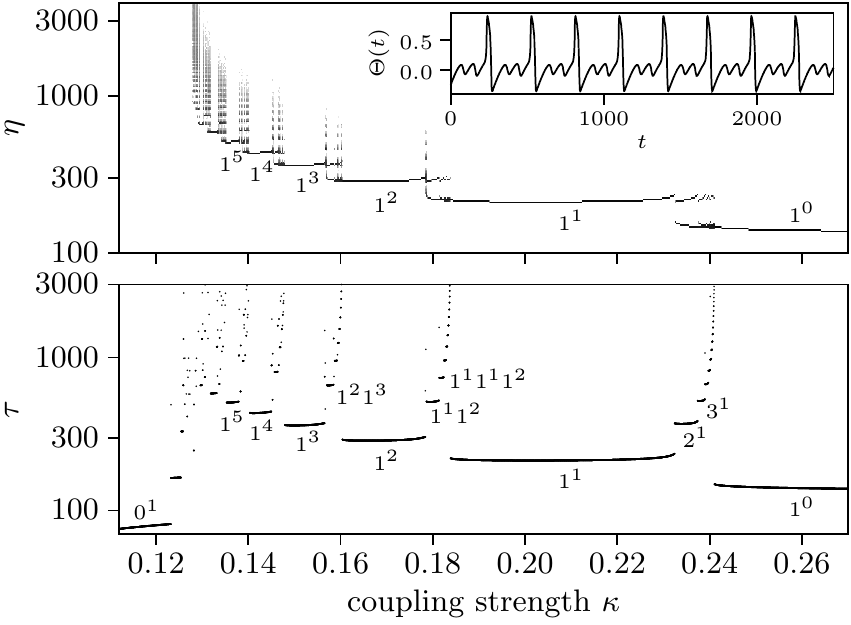}
	\caption{Top: Observed lengths of intervals \(\eta\) between subsequent high-amplitude oscillations (as identified by threshold crossings of a piecewise linear interpolation of the observed time series) depending on the coupling strength~\(\kappa\) in a system consisting of two coupled FitzHugh--Nagumo oscillators (see text) integrated with Fehlberg's 4\textsuperscript{th}-order method and a relative error \(\rho=10^{-5}\) for \(n=10^5\) time units (data already shown in Ref.~\onlinecite{Karnatak2014}).
	Bottom: Lowest period length with which the respective time series comply for an error allowance of \(\sigma=10^{-5}\).
	The occasional single dots (e.g., between the \(1^1\) and the \(1^1 1^2\) regime) occur when a sampling point falls within one of the tiny periodic windows within the small chaotic windows that separate the larger regimes.
	Labels: Mixed-mode oscillations corresponding to selected regimes.
	Inset: Excerpt of the time series for \(\kappa=0.17\) (\(1^2\) MMO).
	}
	\label{fig:bifurcation}
\end{figure}

We conclude this section with two applications of our test to identify possible periodicities in time series generated by simulated dynamical systems.
First, we investigate the coupling regimes of the aforementioned system of two diffusively coupled two-dimensional FitzHugh--Nagumo oscillators.
To this purpose, we employ the shortest period length~\(\tau\) as found by our test and the temporal distances~\(\eta\) between subsequent high-amplitude oscillations --~a marker-event-based observable.
In Fig.~\ref{fig:bifurcation}, we show the dependence of \(\tau\) and~\(\eta\) on the coupling strength~\(\kappa\).
While both allow to separate the regimes of the primary MMOs (\(1^0\), \(1^1\),~\ldots), only~\(\tau\) clearly discriminates between the regimes of secondary MMOs (\(2^1\), \(3^1\), \ldots; \(1^1 1^3\), \(1^1 1^1 1^2\), \ldots; \(1^2 1^3\), \(1^2 1^2 1^3\),~\ldots).
This demonstrates that our method may provide complementary information when analyzing dynamical regimes --~in addition to telling chaotic dynamics from periodic ones.

\begin{figure*}
	\includegraphics{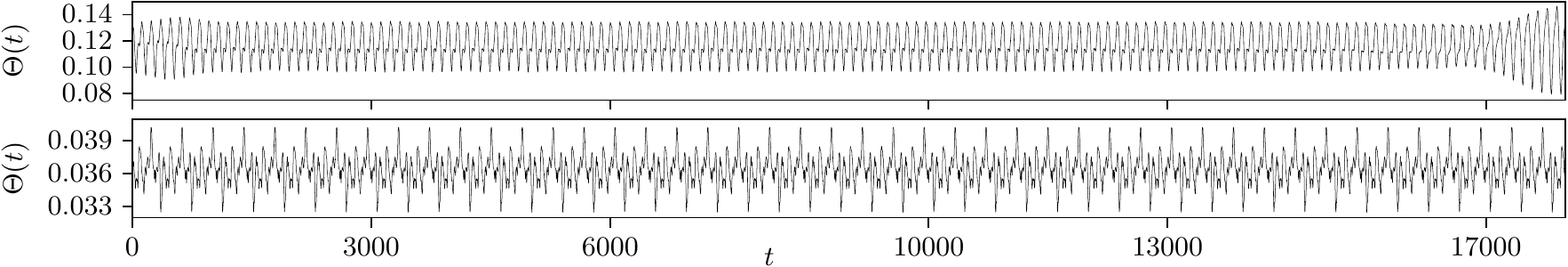}
	\caption{Two exemplary time series of the mean of the first dynamical variable of 10000 FitzHugh--Nagumo oscillators that were diffusively coupled on a small-world topology.}
	\label{fig:application}
\end{figure*}

Finally, we apply our method to time series generated by a small-world network of 10000 diffusively coupled FitzHugh--Nagumo oscillators~\cite{Ansmann2013} (the system will be discussed in detail elsewhere~\cite{Ansmann2015}).
We evolved these systems with  Fehlberg's 4\textsuperscript{th}-order method with a relative error \(\rho=10^{-5}\) and employed the average of the first dynamical variable (\(\bar{x}\) in Refs. \onlinecite{Ansmann2013} and~\onlinecite{Ansmann2015}) as an observable.
As the first dynamical variables roughly range between \(-0.4\) and \(0.9\), we expect the maximal absolute integration error to be roughly~\(\rho\).
This system is of interest here because it is capable of exhibiting long, nearly periodic episodes, which eventually turn out to be a transient behavior.

In the top of Fig.~\ref{fig:application}, we show a time series containing such an episode, whose aperiodicity becomes evident through its finiteness.
For \(1500 \lessapprox t \lessapprox 15000\) this time series appears periodic with a period length of roughly 650.
Had our observation ended at \(t=15000\), visual inspection might thus have led us to the false conclusion that the dynamics might have become a stable, periodic one.
Exemplarily applying our test to this time series for the \(t\in\kle{3000,13000}\), i.e., to the center of the periodic episode, we find that this excerpt does indeed not comply with any period length \(\tau \in \kle{2, 10000}\) for any error allowance~\(\sigma\) smaller than \(10^{-3.2}\).
In the bottom of Fig.~\ref{fig:application}, we show an excerpt of another episode, which continued for at least another \(1.5\cdot10^{6}\) time units and which we thus hypothesize to be periodic.
Applying our test to this time series for \(t \in \kle{3000,13000}\), we find that this excerpt does comply with period lengths around \(388\) for error allowances \(10^{-6} \lessapprox \sigma \lessapprox 10^{-2.2}\).
We obtain the same result when applying the test to the time series for the following \(1.5\cdot10^{6}\) time units, which affirms our hypothesis that this episode is actually periodic.
The marker-event test finds both (de-meaned) time series to be periodic and thus in particular fails to detect the aperiodicity of the first time series.
In all cases, we obtain comparable results for other, similar intervals.

\section{Conclusions}\label{conclusions}
We proposed a method to test whether, for a given time series, there is a periodic function that interpolates it and whose local extrema are captured by the time series.
Due to the conservativeness of the criterion, our method is highly specific and capable of detecting even small deviations from periodicity.
Moreover, our approach yields an interval of possible period lengths that is usually narrow in comparison to the sampling rate and allows for a precise reconstruction of one period of the observable of the time series.
We found that, in typical situations, our method outperforms an alternative, marker-event-based test in terms of specificity, precision of the detected period length, and robustness --~even though this test was tailored to the investigated time series.
By applying it to two typical problems, we also demonstrated our method's usefulness for the analysis of simulated time-continuous dynamical systems.

The first parameter that needs to be chosen for our method is the maximal accepted period length.
Choosing it too high may impair the specificity of the method and its runtime if the time series is not periodic with a short period length.
The second parameter is the error allowance, which can be straightforwardly chosen in the case of a simulated system via the error of the integration method.
The individual features of the time series do not affect the choice of either of these parameters --~while they have to be taken into account for many other methods, e.g., when choosing the marker-events for a marker-event-based approach.

For high-dimensional systems, the main computational challenge (in terms of both, runtime and stability) required to apply our method is evolving the system's dynamics to generate a time series.
Thus, if one has already performed the latter, our method can be applied with little effort---in contrast to commonly used techniques such as the maximum Lyapunov exponent or numerically finding an approximation of a periodic orbit.
The latter is often used in continuation methods~\cite{Krauskopf2007}, whom our method may assist by providing accurate starting values, namely a precise estimate of the period length as well as an approximation of the orbit by folding all dynamical variables.

It is essential for our approach that the period length does not vary considerably and that there are no phase jumps or drifts --~a requirement that is almost only fulfilled by deterministic and stationary systems.
Moreover, our test requires errors to be small and bounded with respect to features of interest.
These requirements are rarely met by real systems or experimental observations, respectively, and thus we cannot expect that our test in its present form will find application for experimental data.
Nonetheless, parts of our approach, in particular our findings on the arithmetics of folding (Th.~\ref{T:sorting}) and the nested-interval approach, may enhance existing or inspire new epoch-folding techniques.
Moreover, a variation of our approach may be applicable to the analysis of those systems that can be approximated as deterministic and stationary for a sufficiently long time with respect to the period length.
One example for such systems are pulsars~\cite{Manchester2005}, where the additional knowledge of the observable's temporal derivative may be employed to make our approach applicable to period lengths that are smaller than the sampling time~\cite{Brazier1994, Freire2001}.

\begin{acknowledgments}
	I thank U.~Feudel, P.~Freiri, R.~Karnatak, A.~Krieger, K.~Lehnertz, and J.~Schwabedal for interesting discussions and S.~Bialonski, K.~Lehnertz, S.~Porz, and A.~Saha for critical comments on earlier versions of the manuscript.
	This work was supported by the Volkswagen Foundation (Grant~Nos. 85392 and 88463)
\end{acknowledgments}

\appendix

\section{Proof of Theorem~\ref{T:main}} \label{proofT1}
\setcounter{theorem}{0}
\begin{theorem}
	For a given~\(\tau\), let \(\tilde{\mathbb{D}}_\tau \defi \klg{\alpha \in \kle{0,n-1} \mid \exists i \in \urbild: i \bmod \tau = \alpha \bmod \tau}\).
	For \(\alpha \in \tilde{\mathbb{D}}_\tau\), let \(l_\alpha \in \urbild\) be defined such that \(l_\alpha \bmod \tau = \alpha \bmod \tau\).
	Finally define \(\tilde{\Theta}_\tau: \tilde{\mathbb{D}}_\tau \rightarrow \mathbb{R}\) via \(\tilde{\Theta}_\tau \kl{\alpha} \defi \Theta\kl{l_\alpha}\).
	Then \(\Theta\) complies with a period length~\(\tau\), if and only if \(\tilde{\Theta}_\tau\) is well-defined and \(E(\tilde{\Theta}_\tau) - C(\tilde{\Theta}_\tau)= E\kl{\Theta}\).
\end{theorem}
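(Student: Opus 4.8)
The plan is to reduce the statement, which quantifies over all \tauperiodic extensions~\(\Phi\), to a comparison of the quantity \(E(\cdot)-C(\cdot)\) evaluated on the three nested domains \urbild \(\subseteq \tilde{\mathbb{D}}_\tau \subseteq \kle{0,n-1}\). The pivotal observation is that every \tauperiodic extension \(\Phi\) of~\(\Theta\) necessarily agrees with \(\tilde{\Theta}_\tau\) on \(\tilde{\mathbb{D}}_\tau\): for \(\alpha \in \tilde{\mathbb{D}}_\tau\) and the integer \(l_\alpha\) with \(l_\alpha \bmod \tau = \alpha \bmod \tau\), periodicity gives \(\Phi(\alpha)=\Phi(\alpha \bmod \tau)=\Phi(l_\alpha \bmod \tau)=\Phi(l_\alpha)=\Theta(l_\alpha)=\tilde{\Theta}_\tau(\alpha)\). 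Applied to two integers \(i \neq j\) sharing a residue, this forces \(\Theta(i)=\Theta(j)\), so that the existence of \(\Phi\) already makes \(\tilde{\Theta}_\tau\) well-defined; this disposes of the well-definedness clause. Thus \(\tilde{\Theta}_\tau\) is exactly the restriction of any periodic extension to the points whose value is forced, and it suffices to understand how \(E-C\) behaves as the domain is enlarged from \urbild to \(\tilde{\mathbb{D}}_\tau\) and then to \(\kle{0,n-1}\).

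First I would record the easy half: enlarging the domain of a function while keeping its values fixed on the common points can only increase~\(E\), since any zigzagging sequence witnessing \(E\) for the smaller domain (Def.~\ref{D:extremum}) remains admissible for the larger one. This yields \(E(\Theta) \leq E(\tilde{\Theta}_\tau) \leq E(\Phi)\) for every periodic extension~\(\Phi\). For the converse construction I would exhibit the \emph{minimal} extension~\(\Phi^\ast\), obtained by interpolating \(\tilde{\Theta}_\tau\) strictly monotonically across the gaps between consecutive points of \(\tilde{\mathbb{D}}_\tau\), consistently across periods; since \(\tilde{\Theta}_\tau\) is well-defined and itself \tauperiodic on \(\tilde{\mathbb{D}}_\tau\), this produces a genuine \tauperiodic extension of~\(\Theta\). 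Monotone interpolation introduces no new local extrema in the interior of a gap, so \(E(\Phi^\ast)=E(\tilde{\Theta}_\tau)\), and the same care at the two endpoints gives \(C(\Phi^\ast)=C(\tilde{\Theta}_\tau)\); hence \(\Phi^\ast\) realizes the value \(E(\tilde{\Theta}_\tau)-C(\tilde{\Theta}_\tau)\).

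The substance of the proof lies in two inequalities that pin down the role of the correction term~\(C\) (Def.~\ref{D:correction}). I would establish that for every periodic extension~\(\Phi\),
\[ E(\Phi)-C(\Phi) \;\geq\; E(\tilde{\Theta}_\tau)-C(\tilde{\Theta}_\tau) \;\geq\; E(\Theta), \]
so that \(E(\tilde{\Theta}_\tau)-C(\tilde{\Theta}_\tau)\) is precisely the minimum of \(E-C\) over all periodic extensions. Granting this chain, the theorem follows at once: if \(\Theta\) complies with~\(\tau\) then some \(\Phi\) attains \(E(\Phi)-C(\Phi)=E(\Theta)\), which forces equality throughout and hence \(E(\tilde{\Theta}_\tau)-C(\tilde{\Theta}_\tau)=E(\Theta)\); conversely, if this equality holds, the minimal extension \(\Phi^\ast\) of the previous paragraph witnesses compliance. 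The right inequality amounts to showing that each undetectable initial or final extremum counted by \(C(\tilde{\Theta}_\tau)\) corresponds to an extra extremum that \(\tilde{\Theta}_\tau\) possesses but \(\Theta\) cannot capture at its first or last integer sample, so the \(C\)-correction never exceeds the surplus \(E(\tilde{\Theta}_\tau)-E(\Theta)\); the left inequality requires the analogous bookkeeping in passing from \(\tilde{\Theta}_\tau\) to the full interpolant, namely that any boundary wiggle of \(\Phi\) raising \(C(\Phi)\) must itself be a local extremum raising \(E(\Phi)\) by at least as much. This boundary bookkeeping---matching each unit of the correction term against a genuine surplus extremum at the two ends of the domain, with the attendant sign case-distinctions of Defs.~\ref{D:extremum} and~\ref{D:correction}---is the main obstacle; the interior behaviour is controlled entirely by the monotonicity of \(E\) under domain extension.
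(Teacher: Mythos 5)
Your proposal is correct and takes essentially the same route as the paper's proof: the same computation showing that any \(\tau\)-periodic extension \(\Phi\) must agree with \(\tilde{\Theta}_\tau\) on \(\tilde{\mathbb{D}}_\tau\) (which also disposes of well-definedness), the same squeeze \(E\kl{\Theta} \leq E(\tilde{\Theta}_\tau) - C(\tilde{\Theta}_\tau) \leq E\kl{\Phi} - C\kl{\Phi}\) to get the forward direction, and the same construction of a gap-wise monotone (in the paper: piecewise linear) interpolant of \(\tilde{\Theta}_\tau\) to witness the converse. The ``boundary bookkeeping'' you defer as the main obstacle is exactly the extension-monotonicity of \(E - C\), which the paper itself states as an unproven opening remark, so your plan matches the paper's proof both in structure and in level of detail.
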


\begin{proof}
We first note that if some function~\(\Gamma_2\) extends some other function~\(\Gamma_1\), we have \(E\kl{\Gamma_1} - C\kl{\Gamma_1} \leq E\kl{\Gamma_2} - C\kl{\Gamma_2}\).

If \(\Theta\) complies with a period length~\(\tau\), there exists a~\(\Phi\) that extends~\(\Theta\) to \(\kle{0,n-1}\) and with \(\Phi\kl{t} = \Phi\kl{t \bmod \tau}\). We therefore have:
\begin{align*}
\tilde{\Theta}_\tau\kl{\alpha}
&= \Theta\kl{l_\alpha}
= \Phi\kl{l_\alpha}
= \Phi\kl{l_\alpha \bmod \tau}\\
&= \Phi\kl{\alpha \bmod \tau}
= \Phi\kl{\alpha}
\quad \forall \alpha \in \tilde{\mathbb{D}}_\tau,
\end{align*}
and thus \(\Phi\)~also is an extension of \(\tilde{\Theta}_\tau\).
Because of this, of \(\tilde{\Theta}_\tau\)~extending~\(\Theta\), and of \(C\kl{\Theta}=0\), we have
\[E\kl{\Theta} = E\kl{\Theta} - C\kl{\Theta} \leq E(\tilde{\Theta}_\tau) - C(\tilde{\Theta}_\tau) \leq E\kl{\Phi} - C\kl{\Phi}. \]
Therefore, \(E\kl{\Phi} - C\kl{\Phi} = E\kl{\Theta}\) yields \(E(\tilde{\Theta}_\tau) - C(\tilde{\Theta}_\tau)= E\kl{\Theta}\).
For \(\tilde{\Theta}_\tau\) to be ill-defined, \(l_\alpha\) must be ill-defined and thus, there need to be \(i,j \in \urbild\) with \(i \bmod \tau = j \bmod \tau\) and \(\Theta\kl{i} \neq \Theta\kl{j}\).
From this, it directly follows that no extension of~\(\Theta\) can be \(\tau\)-periodic.

To show the other direction of the equivalence, we construct an extension~\(\tilde{\Phi}\) of~\(\tilde{\Theta}_\tau\) to \(\kle{0,n-1}\) by piecewise linear interpolation, for which we thus have \(E\kl{\Phi} - C\kl{\Phi} =  E(\tilde{\Theta}_\tau) - C(\tilde{\Theta}_\tau)\).
If \(E(\tilde{\Theta}_\tau) - C(\tilde{\Theta}_\tau)= E\kl{\Theta}\), we thus also have  \(E\kl{\Phi} - C\kl{\Phi} = E\kl{\Theta}\).
Furthermore,
\begin{align*}
\tilde{\Phi}\kl{\alpha}
&= \tilde{\Theta}_\tau\kl{\alpha}
= \Theta \kl{l_\alpha}
= \Theta \kl{l_{\alpha\bmod\tau}}\\
&= \tilde{\Theta}_\tau\kl{\alpha \bmod \tau}
= \tilde{\Phi}\kl{\alpha \bmod \tau}
\quad\forall \alpha \in \tilde{\mathbb{D}}_\tau,
\end{align*}
and due to piecewise linear interpolation, also \(\tilde{\Phi}\kl{t} = \tilde{\Phi}\kl{t \bmod \tau} ~\forall t \in \kle{0,n-1}\).
Thus, the requirements of Def.~\ref{D:periodic} are fulfilled.
\end{proof}

\section{Proof of Theorem~\ref{T:sorting}}\label{proofT2}
\setcounter{theorem}{1}
\begin{theorem}
	Let \(1<\tau<n\) and let \(\frac{p}{q}\) be the largest  and~\(\frac{r}{s}\) be the smallest reduced fraction from \Fn such that \(\frac{p}{q} < \tau \leq \frac{r}{s}\).
	Define \(I_{n,\tau}\kl{i} \defi ir \bmod \kl{p+r}\) for \(i\in \klg{0,\ldots,p+r}\).
	Then \(I_{n,\tau}\) sorts the first~\(p+r\) integers modulo~\(\tau\).
	Moreover, \(I_{n,\tau}\) increases strictly monotonically on \urbild if \(\tau<\frac{r}{s}\).
\end{theorem}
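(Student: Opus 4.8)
The plan is to convert the neighbour hypothesis into the single arithmetic identity \(qr-ps=1\), and then to read the permutation \(I_{n,\tau}\kl{i}=ir\bmod\kl{p+r}\) as a walk on \(\klg{0,\dots,p+r-1}\) whose every step either adds \(r\) to or subtracts \(p\) from the current value, and to show that each such step strictly raises the value taken modulo~\(\tau\).

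First I would record the consequences of \(\frac pq\) and \(\frac rs\) being adjacent in \Fn. Since no reduced fraction with numerator below~\(n\) lies strictly between them, the standard Stern--Brocot/Farey relation gives \(qr-ps=1\); in particular \(\gcd\kl{p,r}=1\), hence \(\gcd\kl{r,p+r}=1\), so \(I_{n,\tau}\) maps \(\klg{0,\dots,p+r-1}\) bijectively onto itself (the permutation part of ``sorts''). I would also note the fact, needed later for strictness, that every reduced fraction strictly between \(\frac pq\) and \(\frac rs\) has numerator at least \(p+r\). Using \(\frac pq<\tau\le\frac rs\) together with \(p,r\ge1\) and \(qr-ps=1\), a short computation pins down \(\floor{r/\tau}=s\) and \(\floor{p/\tau}=q-1\), so that \(r\bmod\tau=r-s\tau\) and \(p\bmod\tau=p-\kl{q-1}\tau\), both lying in \([0,\tau)\).

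Next I would set \(m\defi p+r\) and observe that for consecutive indices \(I_{n,\tau}\kl{i+1}=\kl{I_{n,\tau}\kl{i}+r}\bmod m\) equals \(I_{n,\tau}\kl{i}+r\) when \(I_{n,\tau}\kl{i}<p\) (a ``\(+r\)'' step) and \(I_{n,\tau}\kl{i}-p\) otherwise (a ``\(-p\)'' step). Passing to residues modulo~\(\tau\), a \(+r\) step raises the residue by \(r-s\tau\ge0\) provided it does not wrap past~\(\tau\), whereas a \(-p\) step raises it by \(\tau-\kl{p\bmod\tau}=q\tau-p>0\) provided it does wrap. The whole statement therefore reduces to two claims about the numbers \(k\bmod\tau\): (A) if \(k<p\) then \(\kl{k\bmod\tau}+\kl{r\bmod\tau}<\tau\), and (B) if \(p\le k\le m-1\) then \(k\bmod\tau\le p\bmod\tau\).

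I would prove (A) and (B) by contradiction, and this is where I expect the real work. If (A) failed for some \(k<p\), then writing \(t\defi\floor{k/\tau}\) and clearing denominators (using \(qr-ps=1\)) would turn the failure into \(\frac{k}{t+1}\ge\frac pq\) while \(\frac{k}{t+1}<\tau\); since its numerator is at most \(k<p<n\), its reduced form lies in \Fn, so maximality of \(\frac pq\) forces \(\frac{k}{t+1}=\frac pq\), which is impossible for \(k<p\). Claim (B) is handled symmetrically: a failure produces the fraction \(\frac{k-p}{\floor{k/\tau}-q+1}\), which the bounds \(\frac pq<\tau\le\frac rs\) sandwich in \(\leftopen{\tau,\frac rs}\) with numerator below~\(n\), so minimality of \(\frac rs\) forces it to equal \(\frac rs\), again impossible. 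Together (A) and (B) show the residues are non-decreasing along \(I_{n,\tau}\), giving the first assertion. For the ``moreover'', when \(\tau<\frac rs\) we have \(\tau\notin\Fn\), hence \(r-s\tau>0\) and no two of the first \(n\) integers share a residue; the only \(-p\) step that could fail to increase the residue would have to land on residue~\(0\), i.e.\ on a nonzero multiple of~\(\tau\) below~\(m\), and this is excluded precisely because every fraction strictly between \(\frac pq\) and \(\frac rs\) has numerator at least \(m\). Thus every step on \urbild is strict and \(I_{n,\tau}\) increases strictly monotonically there. The main obstacle is the careful bookkeeping of the wrap/no-wrap dichotomy in (A) and (B), where the extremality of the Farey neighbours must be invoked exactly once in each direction.
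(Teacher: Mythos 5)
Your route is genuinely different from the paper's, so it is worth saying first what each does. The paper avoids any step-by-step analysis: setting \(d_k \defi \floor{kr/\kl{p+r}}\), it proves the closed-form identity \(I_{n,\tau}\kl{k} \bmod \tau = \kl{k-d_k}\kl{r \bmod \tau} + d_k\kl{q\tau - p}\), the crux being that the right-hand side stays below \(\tau\) because \(0 \leq k-d_k \leq p\), \(d_k \leq r-1\), and \(p\kl{r-s\tau} + r\kl{q\tau-p} = \kl{rq-ps}\tau = \tau\); since \(d_k\) and \(k-d_k\) are nondecreasing in \(k\) and at least one of them increases at each step, both assertions (including the weak/strict dichotomy at \(\tau = \frac{r}{s}\)) follow at once. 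Your walk decomposition plus two Farey-extremality claims is a legitimate alternative, and the contradictions you sketch for (A) and (B) do close by exactly the computation you describe: for (A), with your \(t \defi \floor{k/\tau}\), failure gives \(k \geq \kl{t+s+1}\tau - r\), hence \(qk > p\kl{t+s+1} - qr = p\kl{t+1}-1\) using \(q\tau > p\) and \(qr-ps=1\), hence \(qk \geq p\kl{t+1}\) by integrality; for (B), with \(t \defi \floor{k/\tau}-q+1\), the bound \(k < \kl{t+q}\tau \leq \kl{t+q}\frac{r}{s}\) gives \(s\kl{k-p} < rt + \kl{qr-ps} = rt+1\), hence \(s\kl{k-p} \leq rt\), which is the upper half of your sandwich.

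The gap is in the reduction of the theorem to (A) and (B). As you yourself note, a \(-p\) step raises the residue only \emph{provided it wraps}, which requires the strict inequality \(k \bmod \tau < p \bmod \tau\); claim (B) delivers only \(\leq\). The difference is not cosmetic. At \(k=p\), where equality holds trivially, the step out of \(k\) lands on residue \(0\) and thus strictly \emph{decreases} the residue, because \(p \bmod \tau > 0\) here (were \(p\bmod\tau=0\), minimality would force \(\tau=\frac{r}{s}\) with \(r \mid p\); but \(ps+1=qr\) makes \(p\) and \(r\) coprime, so \(r=1\) and \(\tau=\frac{1}{s}\), contradicting \(1<\tau<n\)). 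And the walk really does reach \(p\): \(\kl{p+r-1}r \equiv -r \equiv p \pmod{p+r}\), so \(p\) is the \emph{last} value visited. Hence your inference that (A) and (B) make the residues non-decreasing needs two facts you never state or prove: (i) the definition of sorting the first \(p+r\) integers constrains only steps starting at indices \(\leq p+r-2\), and since \(I_{n,\tau}\kl{p+r-1}=p\) and \(I_{n,\tau}\) is injective on \(\klg{0,\ldots,p+r-1}\), every constrained \(-p\) step starts from some \(k>p\); and (ii) for \(k>p\), equality in (B) is impossible, since \(k\bmod\tau = p\bmod\tau\) would give \(k-p = j\tau\) for a positive integer \(j\), so \(\tau = \kl{k-p}/j\) is a fraction with numerator at most \(r-1<n\), forcing \(\tau = \frac{r}{s}\) by minimality, and then \(s \mid j\) yields \(k-p \geq r\), contradicting \(k-p \leq r-1\). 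Your ``moreover'' paragraph contains essentially argument (ii), but you invoke it only under the hypothesis \(\tau < \frac{r}{s}\), whereas the sorting claim needs it precisely in the remaining case \(\tau = \frac{r}{s}\). Both repairs are short and in the spirit of your proof, but as written the central inference does not go through; this boundary bookkeeping is exactly what the paper's closed-form identity sidesteps.
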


\begin{proof}
Since \Fn is the element-wise inverse of the \(n\)-th Farey sequence~\(\mathbb{F}_n\), we directly get from the theory of Farey sequences
\cite{*[{Theorem~28 in }][] Hardy1979}:
\begin{equation}\label{eq:ps1rq}
	ps + 1 = rq
\end{equation}
Thus \(p\) and \(r\) are coprime and consequently so are \(r\) and \(p+r\).
From this we get that for all \(i \neq j;~ i,j < p+r\):
\[I_{n,\tau}\kl{i} = ir \bmod \kl{p+r} \neq jr \bmod \kl{p+r} = I_{n,\tau}\kl{j},\]
i.e., that \(I_{n,\tau}\) is a bijection on \(\klg{0,\ldots,p+r-1}\)
\cite{*[{Proposition 2.1.13 in }][] Stein2008}.

From the definition of \(p\), \(q\), \(r\), and \(s\), we get:
\begin{align}
	-p \bmod \tau = q\tau - p > 0 \label{eq:qtaump} \\
	r \bmod \tau = r - s \tau \geq 0 \label{eq:rmstau} .
\end{align}
Now, let \(k \in \klg{0, \ldots, p+r-1}\) and
\begin{equation}\label{eq:dkklr}
	d_k \defi \frac{kr - kr \bmod \kl{p+r}}{p+r} = \floor{\frac{kr}{p+r}} \leq r-1.
\end{equation}
As \(d_k\) and \(k-d_k\) are monotonically increasing with~\(k\), we can make the following estimate for \(k-d_k\) by inserting the lowest and highest value for~\(k\):
\begin{equation}\label{eq:kmdkleqp}
	0 \leq k-d_k \leq \kl{p+r-1} - \kl{r-1} = p.
\end{equation}
Equations~\ref{eq:qtaump} to~\ref{eq:kmdkleqp} allow us to show the following inequality:
\begin{equation} \label{eq:kltau}
\begin{split}
	\kl{k-d_k} \kl{r \bmod \tau} + d_k\kl{-p \bmod \tau}\\
	~<~ p \kl{r-s\tau} + r \kl{q\tau -p}
	~=~ \kl{rq-ps} \tau
	~\refrel{eq:ps1rq}{=}~ \tau.
\end{split}
\end{equation}
This also gives us that both, \(\kl{k-d_k} \kl{r \bmod \tau}\) and \(d_k\kl{-p \bmod \tau}\), are smaller than~\(\tau\).
Using
\begin{align*}
m \in \mathbb{N};~
\beta \in \mathbb{R};~
m \kl{\beta \bmod \tau} < \tau\\
\Rightarrow
m\beta \bmod \tau = m \kl{\beta \bmod \tau},
\end{align*}
we can thus write:
\begin{align}
	\label{eq:kltau1}
	\kl{k - d_k} r \bmod \tau
	= \kl{k - d_k} \kl{r \bmod \tau}\\
	\label{eq:kltau2}
	d_k \cdot \kl{-p} \bmod \tau
	= d_k\kl{-p \bmod \tau}.
\end{align}

Finally, we can write:
\newcommand{\gleich}{\mathrel{\makebox[\widthof{$\dblrefrel{eq:kltau1}{eq:kltau2}{=}$}]{$=$}}}
\begin{align*}
	&\hphantom{\gleich I\!} I_{n,\tau}\kl{k} \bmod \tau\\
	&\gleich \kle{kr \bmod \kl{p+r}} \bmod \tau \\
	&\mathrel{\makebox[\widthof{$\dblrefrel{eq:kltau1}{eq:kltau2}{=}$}]{$\refrel{eq:dkklr}{=}$}} \kle{kr - d_k \kl{p+r}} \bmod \tau \\
	&\gleich \kle{\kl{k - d_k} r + d_k \cdot \kl{-p}} \bmod \tau\\
	&\gleich \kle{ \kl{k - d_k} r \bmod \tau + d_k \cdot \kl{-p} \bmod \tau } \bmod \tau \\
	&\dblrefrel{eq:kltau1}{eq:kltau2}{=} \kle{\kl{k - d_k} \kl{r \bmod \tau} + d_k\kl{-p \bmod \tau} } \bmod \tau \\
	&\mathrel{\makebox[\widthof{$\dblrefrel{eq:kltau1}{eq:kltau2}{=}$}]{$\refrel{eq:kltau}{=}$}} \kl{k - d_k} \kl{r \bmod \tau} + d_k\kl{-p \bmod \tau}.
\end{align*}
Since \(k-d_k\) and \(d_k\) are both monotonically increasing with~\(k\) and at least one of them increases if \(k\)~is increased by~\(1\), \(I_{n,\tau}\kl{k} \bmod \tau\) is strictly monotonically increasing, unless \(\tau=\frac{r}{s}\) and thus \(r \bmod \tau = 0\), in which case it is only weakly monotonically increasing.
\end{proof}

\section{Asymptotic runtime behavior}\label{runtime}

We first estimate the behavior of the average runtime, if \(2 < \tau \ll n\) is the shortest period length that \(\Theta\)~complies with.
To this purpose we employ the following facts, approximations and assumptions:
\renewcommand{\theenumi}{\Alph{enumi}}
\renewcommand{\labelenumi}{(\theenumi)}
\begin{enumerate}
	\item \label{ass:basic}The runtime of the checks performed in steps \ref{finaltest} and~\ref{quicktest} of the algorithm is approximately \(\mathbf{c} n\) or \(\mathbf{c} \breve{n}\), respectively, with some constant~\(\mathbf{c}\).
	This is based on the assumption that run-time reductions due to aborting the counting of extrema early because their number already suffices to reject periodicity can be accounted for by a constant factor (which is already incorporated in~\(\mathbf{c}\)).
	We approximate everything except these checks to have a runtime of~\(0\).
	\item \label{ass:FPR} We assume that, if \(\Theta\) does not comply with any period length in \(\kl{\tfrac{p}{q}, \tfrac{r}{s}}\), the check in step~\ref{quicktest} is positive with a probability \(\phi\) and that this is independent of other results.
	We further assume that \(\phi<\tfrac{1}{2}\), which we could confirm empirically for exemplary time series without often repeating values, i.e., for which there were no pairwise different \(t_1, \ldots, t_v\) such that \(\Theta\kl{t_1}=\ldots=\Theta\kl{t_v}\) and \(\tfrac{v}{n} \gg 0\).
	\item \label{ass:leftright} We assume that, at some level of the binary search, \(\tau\) is equally likely to be in the left or right branch of the Stern--Brocot tree and this is independent of other results.
	\item \label{ass:statistical_length} On any level~\(j\) of the Stern--Brocot tree, we approximate that, for each interval, \(\breve{n}\)~corresponds to the average value of~\(\breve{n}\) over all intervals on this level.
	We denote this average value by~\(\bar{\breve{n}}_j\).
	This approximation is based on the assumption that it is essentially at random which intervals are investigated.
	\item Let \(h\) be the level at which the smallest interval containing~\(\tau\) involved in the search resides.
	Then we assume that no higher level than~\(h\) is involved in the search.
	\item One can neglect the additional checks that are made to ensure that a maximal interval is found such that \(\Theta\) complies with all period lengths in that interval.
\end{enumerate}

Going by these assumptions, we now first estimate the number~\(a_j\) of checks performed on the \(j\)-th level (of the employed branch of the Stern--Brocot tree):
At each level between \(1\) and~\(h\), we perform one check for the interval containing~\(\tau\).
At each level between \(2\) and~\(h\) an additional check is performed with a probability of \(\tfrac{1}{2}\) (if \(\tau\) is in the right branch; see assumption~\ref{ass:leftright}).
Each of these checks has a probability of~\(\phi\) to cause two additional checks an the next level, each of which in turn has a probability of~\(\phi\) to cause two additional checks on the next level and so forth (assumption~\ref{ass:FPR}).
We therefore obtain on average:
\begin{align*}
a_j
&= 1 + \begin{cases} 0 &\text{if } j = 1 \\
 \tfrac{1}{2}  & \text{else}\end{cases} + \sum_{i=1}^{j-2} \kl{2\phi}^i
\leq 1 + \frac{1}{2} + \sum_{i=1}^{\infty} \kl{2\phi}^i\\
&= \frac{1}{2} + \sum_{i=0}^{\infty} \kl{2\phi}^i
= \frac{1}{2} + \frac{1}{1-2\phi} \ifed \hat{a}.
\end{align*}

Now, let the numerators on some level~\(j\) of the Stern--Brocot tree be \(z_1, \ldots, z_w\).
Then the numerators on the next level are \(z_1, z_1+z_2, z_2, \ldots, z_{w-1}+z_w, z_w\) and thus (by approximation~\ref{ass:statistical_length}):
\begin{align*}
\bar{\breve{n}}_j
&= \tfrac{1}{w-1} \kl{z_1 + 2 \sum_{i=2}^{w-1} z_i + z_w}\\
&= \tfrac{2}{3} \tfrac{1}{2\kl{w-1}} \kl{3 z_1 + 6 \sum_{i=2}^{w-1} z_i + 3 z_w}
= \tfrac{2}{3} \bar{\breve{n}}_{j+1}.
\end{align*}
Using this, we obtain for the runtime~\(\mathbf{g}_j\) of each check performed on the \(j\)-th level (by approximation~\ref{ass:basic}):
\[
\mathbf{g}_j = \mathbf{c} \bar{\breve{n}}_{h-1}\kl{\frac{2}{3}}^{h-j-1} ~\forall j \in \klg{1, \ldots, h-1}
~\text{and}~
\mathbf{g}_h=\mathbf{c}n.
\]
As \(\bar{\breve{n}}_{h-1} <n\) , we can thus estimate \(\mathbf{g}_j \leq \mathbf{c} n \kl{\tfrac{2}{3}}^{h-j-1}\) for all~\(j\).
Finally, we obtain for the total runtime~\(\mathbf{r}\):
\begin{align*}
\mathbf{r}
&= \sum_{j=1}^h a_j \mathbf{g}_j
\leq \sum_{j=1}^h \hat{a} \mathbf{c} n \kl{\frac{2}{3}}^{h-j-1}
= \hat{a} \mathbf{c} n \sum_{i=-1}^{h-2} \kl{\frac{2}{3}}^i\\
&< \hat{a} \mathbf{c} n \frac{3}{2} \sum_{i=0}^\infty \kl{\frac{2}{3}}^i
= \hat{a} \mathbf{c} n \frac{3}{2} \frac{1}{1-\tfrac{2}{3}}
= \frac{9}{2} \kl{\frac{1}{2} + \frac{1}{1-2\phi}} \mathbf{c} n\\
&= \mathcal{O}\kl{n}
\qquad\text{as}\qquad
n \rightarrow \infty
\end{align*}

The above does not apply, if \(\Theta\) does not comply with any period length smaller than \(n-1\).
This is because assumption~\ref{ass:leftright} and approximation~\ref{ass:statistical_length} do not hold anymore as \(\tau\) is always in the right branch and thus it cannot be considered random which interval is investigated.
In this case, even if \(\phi=0\), we have to check the intervals \(\leftopen{\tfrac{2}{1},\tfrac{3}{1}}, \leftopen{\tfrac{3}{1},\tfrac{4}{1}}, \ldots, \leftopen{\tfrac{\ceil{\tau_\text{max}}-1}{1},\tfrac{\ceil{\tau_\text{max}}}{1}}\) and thus the total runtime is:
\begin{align*}
 \mathbf{r}
&= \sum_{i=2}^{\ceil{\tau_\text{max}}-1} \kl{2i+1} \mathbf{c}
= \kl{\ceil{\tau_\text{max}}^2 - 4} \mathbf{c}\\
&= \mathcal{O}\kl{\tau_\text{max}^2}
\qquad\text{as}\qquad
\tau_\text{max} \rightarrow \infty
\end{align*}
In particular, we have \(\mathbf{r}=\mathcal{O}(n^2)\) as \(n \rightarrow \infty\), if \(\tau_\text{max} \propto n\).
A similar approximation can be made, if the shortest period length that \(\Theta\)~complies with is close to \(n\).

\end{document}